%
\documentclass[a4paper, abstract, 11pt]{scrartcl}
\usepackage[UKenglish]{babel}
\usepackage[rgb]{xcolor}
\usepackage{subcaption}
\usepackage{combelow}
\usepackage{tikz}
\usetikzlibrary{fit}
\usepackage{setspace}
\setlength{\intextsep}{2mm}
\usepackage{csquotes}
\usepackage{microtype}
\usepackage{amsmath, amsthm,thmtools,thm-restate}
\usepackage{a4wide}
\usepackage[libertine, sb]{newtx}
\usepackage{inconsolata}
\usepackage{bm}
\usepackage{enumitem}
\usepackage[roman]{complexity}
\usepackage[colorlinks=true,allcolors=linkColor, hypertexnames=false]{hyperref} 
\usepackage{nameref}
\usepackage[nameinlink]{cleveref}

\renewcommand{\LIN}{\ensuremath{\textup{LIN}}}

\newcommand\bigmid{\ensuremath{\mathrel{\Bigg|}}}
\renewcommand\epsilon\varepsilon
\renewcommand\emptyset\varnothing

\newcommand{\X}{\mathcal{X}}
\newcommand{\Y}{\mathcal{Y}}
\renewcommand{\M}{\mathcal{M}}
\newcommand{\F}{\mathbb{F}}

\DeclareMathOperator\myC{\mathscr{C}}
\DeclareMathOperator\CSP{\mathrm{CSP}}
\DeclareMathOperator\ar{\mathrm{ar}}

\DeclareMathOperator\codim{codim}

\definecolor{color1}{Hsb}{100, 0.8, 0.8}
\definecolor{color2}{Hsb}{220, 0.8, 0.8}
\definecolor{color3}{Hsb}{340 , 0.8, 0.8}
\definecolor{linkColor}{Hsb}{100, 0.9, 0.40}

\declaretheorem[shaded={bgcolor=linkColor!5}]{theorem}
\declaretheorem[sibling=theorem, shaded={bgcolor=linkColor!5}]{lemma, corollary, proposition, conjecture}

\theoremstyle{definition}
\declaretheorem[sibling=theorem, shaded={bgcolor=linkColor!5}]{definition}

\theoremstyle{remark}
\declaretheorem[sibling=theorem, shaded={bgcolor=linkColor!5}]{remark}

\newenvironment{innerproof}[1]{
\begin{proof}[Proof of #1]}{\end{proof}}

\usepackage{todonotes}

\title{Strong Sparsification for 1-in-3-SAT \\  via Polynomial Freiman-Ruzsa\thanks{An extended abstract of part of this work appeared in Proceedings of FOCS 2025. This work was supported by UKRI EP/X024431/1 and a Clarendon Fund Scholarship. Work done while Benjamin Bedert and Tamio-Vesa Nakajima were at the University of Oxford. For the purpose of Open Access, the authors have applied a CC BY public copyright licence to any Author Accepted Manuscript version arising from this submission. All data is provided in full in the results section of this paper.}}

\author{Benjamin Bedert\\
University of Cambridge\\
\url{bb741@cam.ac.uk}
\and
Tamio-Vesa Nakajima\\
University of Marburg\\
\url{nakajima@uni-marburg.de}
\and
Karolina Okrasa\\
University of Oxford\\
\url{karolina.okrasa@cs.ox.ac.uk}
\and
Stanislav \v{Z}ivn\'{y}\\
University of Oxford\\
\url{standa.zivny@cs.ox.ac.uk}}

\setkomafont{author}{\large}

\begin{document}
\maketitle
\begin{abstract}
We introduce a new notion of sparsification, called \emph{strong sparsification}, in which constraints are not removed but variables can be merged. As our main result, we present a strong sparsification algorithm for 1-in-3-SAT. The correctness of the algorithm relies on establishing a sub-quadratic bound on the size of certain sets of vectors in $\F_2^d$. This result, obtained using the recent \emph{Polynomial Freiman-Ruzsa Theorem} (Gowers, Green, Manners and Tao, Ann. Math. 2025), could be of independent interest. As an application, we improve the state-of-the-art algorithm for approximating linearly-ordered colourings of 3-uniform hypergraphs (H{\aa}stad, Martinsson, Nakajima and {\v{Z}}ivn{\'{y}}, APPROX 2024).
We also investigate the existence of strong sparsification algorithms for other constraint satisfaction problems.
\end{abstract}

\section{Introduction}

\emph{Sparsification}, the idea of reducing the size of an object of interest (such as a graph or a formula) while preserving its inherent properties, has been tremendously successful in many corners of computer science. 

One notion of sparsification comes from the influential paper of Bencz\'ur and
Karger~\cite{Benczur96:stoc}, who showed that, for any $n$-vertex graph $G$, one
can efficiently find a weighted subgraph $G'$ of $G$ with $O(n\log n)$ many
edges, so that the size of \emph{all} cuts in $G$ is approximately preserved in
$G'$, up to a small multiplicative error. The bound on the size of $G'$ was
later improved to linear by Batson, Spielman and
Srivastava~\cite{Batson14:siam}. The optimality of the dependency on
$\varepsilon$ was established by the works of Andoni, Chen, Krauthgamer, Qin,
Woodruff and Zhang~\cite{Andoni16:itcs} and Carlson, Kolla, Srivastava and Trevisan
\cite{DBLP:conf/soda/CarlsonKST19}. From the many
follow-up works, we mention the paper of Kogan and
Krauthgamer~\cite{Kogan15:itcs}, who initiated the study of sparsification for
constraint satisfaction problems (CSPs), Filtser and Krauthgamer~\cite{Filtser17:sidma}, who classified sparsifiable Boolean binary CSPs, and Butti and \v{Z}ivn\'y~\cite{Butti20:sidma}, who classified sparsifiable binary CSPs on all finite domains.
Some impressive progress on this line of work has been made in recent years by Khanna, Putterman and Sudan~\cite{Khanna24:soda,Khanna24:focs,Khanna25:stoc}, establishing optimal sparsifiers for classes of CSPs and codes, and Brakensiek and Guruswami~\cite{Brakensiek25:stoc}, who pinned down the sparsifiability of all CSPs (up to polylogarithmic factors, and non-efficiently). A different but related notion of sparsification is the concept of (unweighted) additive cut sparsification, introduced by Bansal, Svensson and Trevisan~\cite{Bansal19:focs},
later studied for other CSPs by Pelleg and \v{Z}ivn\'y~\cite{Pelleg24:talg}.

Another study on sparsification comes from computational complexity. The \emph{Exponential Time Hypothesis} (ETH) of Impagliazzo, Paturi and Zane~\cite{Impagliazzo01:jcss} postulates that 3-SAT requires exponential time: there exists $\delta>0$ such that an $n$-variable 3-SAT instance requires time $O(2^{\delta n})$.\footnote{The weaker hypothesis $\P \neq \NP$ only postulates that 3-SAT requires super-polynomial time.} The \emph{sparsification lemma} from the same paper~\cite{Impagliazzo01:jcss} is then used to establish that ETH implies that exponential time is needed for a host of other problems. The lemma roughly says that any $n$-variable 3-SAT instance is equisatisfiable to an OR of exponentially many 3-SAT formulae, each of which has only linearly many clauses in $n$.\footnote{The precise statement includes a universal quantification over an arbitrarily small $\epsilon>0$ that controls the growth of the exponentials involved.} This should be contrasted with what can (or rather cannot) be done in polynomial time: under the assumption that $\NP\not\subseteq \coNP/\poly$, Dell and Melkebeek showed that 3-SAT cannot be sparsified in polynomial time into an equivalent formula with $O(n^{3-\epsilon})$ clauses~\cite{Dell14:jacm}. 

Drawing on techniques from fixed-parameter tractability~\cite{Cygan15:book} and
kernelisation~\cite{Fomin2019kernelization,Cygan15:book}, Jansen and
Pieterse~\cite{Jansen19:toct} and Chen, Jansen and
Pieterse~\cite{Chen20:algorithmica} studied which \NP-complete Boolean CSPs
admit a non-trivial sparsification. In particular, they observed that 1-in-3-SAT
admits a linear-size sparsifier, meaning an equivalent instance with $O(n)$ many
clauses, where $n$ is the number of variables~\cite{Jansen19:toct}. Building on
techniques from the algebraic approach to CSPs, Lagerkvist and Wahlstr\"om then
considered CSPs over domains of larger size~\cite{Lagerkvist20:toct}. These
results were recently generalised by Brakensiek, Guruswami, Jansen, Lagerkvist
and Wahlstr{\"{o}}m~\cite{DBLP:journals/corr/abs-2507-07942} in their study of
the so-called \emph{non-redundant} CSP instances.

In the present article we will be interested in sparsifying \emph{approximate} problems. When dealing with \NP-hard problems, there are two natural ways to relax the goal of exact solvability and turn to approximation: a quantitative one and a qualitative one. The first one seeks to maximise the number of satisfied constraints. A canonical example is the max-cut problem: finding a cut of maximum size is \NP-hard, but a cut of size at least roughly 0.878 times the optimum can be efficiently found by the celebrated result of Goemans and Williamson~\cite{Goemans95:jacm}. The second goal seeks to satisfy all constraints but in a weaker form. Here are a few examples of such problems. Firstly, the approximate graph colouring problem, studied by Garey and Johnson in the 1970s~\cite{GJ76}: given a $k$-colourable graph $G$, find an $\ell$-colouring of $G$ for some $\ell \geq k$. Secondly, finding a satisfying assignment to a $k$-SAT instance that is promised to admit an assignment satisfying at least $\lceil k/2\rceil$ literals in each clause --- a problem coined  $(2+\epsilon)$-SAT by Austrin, Guruswami and H{\aa}stad~\cite{AGH17}. Finally, given a satisfiable instance of (monotone) 1-in-3-SAT, find a satisfying not-all-equal assignment~\cite{BG21}. The former, quantitative notion of approximation has led to many breakthroughs in the last three decades, including the Probabilistically Checkable Proof (PCP) theorem~\cite{Arora98:jacm-probabilistic,Arora98:jacm-proof,Dinur07:jacm}. The latter, qualitative notion has been investigated systematically only very recently under the name of \emph{Promise Constraint Satisfaction Problems} (PCSPs)~\cite{AGH17,BG21,BBKO21}.
Unfortunately, traditional notions of sparsification, involving removing constraints, fail when applied to qualitative approximation.\footnote{Although recent work on the so-called \emph{conditional non-redundancy}~\cite{Brakensiek25:stoc,DBLP:journals/corr/abs-2507-07942} may offer some new techniques here.}

\begin{figure}[t]
    \begin{subfigure}[t]{.48\textwidth}
    \centering
    \begin{tikzpicture}[scale=.95]
        \begin{scope}[local bounding box=G1]
            \node[circle, draw] (A) at (0, 0) {};
            \node[circle, draw] (B) at (1, 0) {};
            \node[circle, draw] (C) at (0, 1) {};
            \node[circle, draw] (D) at (1, 1) {};
            \draw(A) -- (B) -- (D) -- (C) -- (A);

            \node at (0.5, 0.5) {$G$};
        \end{scope}
        
        \begin{scope}[shift={(2.5, 0)}, local bounding box=G2]
            \node[circle, draw] (A) at (0, 0) {};
            \node[circle, draw] (B) at (1, 0) {};
            \node[circle, draw] (C) at (0, 1) {};
            \node[circle, draw] (D) at (1, 1) {};
            \draw(C) -- (A) (B) -- (D) -- (C);
            \draw[dashed] (A) -- (B);

            \node at (0.5, 0.5) {$G'$};
        \end{scope}
        
        \begin{scope}[shift={(2.5, -2.5)}, local bounding box=G3]
            \node[circle, draw, fill=color1!85] (A) at (0, 0) {};
            \node[circle, draw, fill=color1!85] (B) at (1, 0) {};
            \node[circle, draw, fill=color2!85] (C) at (0, 1) {};
            \node[circle, draw, fill=color3!85] (D) at (1, 1) {};
            \draw(C) -- (A) (B) -- (D) -- (C);
            \draw[dashed] (A) -- (B);
        \end{scope}
        
        \begin{scope}[shift={(0, -2.5)}, local bounding box=G4]
            \node[circle, draw, fill=color1!85] (A) at (0, 0) {};
            \node[circle, draw, fill=color1!85] (B) at (1, 0) {};
            \node[circle, draw, fill=color2!85] (C) at (0, 1) {};
            \node[circle, draw, fill=color3!85] (D) at (1, 1) {};
            \draw(C) -- (A) (B) -- (D) -- (C);
            \draw[color3, very thick] (A) -- (B);
        \end{scope}

        \draw[gray, thick, ->, shorten <= .3cm, shorten >= .3cm] (G1) -- (G2);
        \draw[gray, thick, ->, shorten <= .3cm, shorten >= .3cm] (G2) -- (G3);
        \draw[gray, thick, ->, shorten <= .3cm, shorten >= .3cm] (G3) -- (G4);
    \end{tikzpicture}
    \caption{}\label{fig:1}
    \end{subfigure}%
    \begin{subfigure}[t]{.48\textwidth}
    \centering
    \begin{tikzpicture}[scale=0.95]
        \begin{scope}[local bounding box=G1]
            \node[circle, draw] (A) at (0, 0) {};
            \node[circle, draw] (C) at (0, 1) {};
            
            \node[circle, draw] (B) at (1, 0) {};
            \node[circle, draw] (D) at (1, 1) {};
            \draw(A) -- (B) -- (D) -- (C) -- (A);
            \node[fill=gray, rotate fit=45, fit=(A) (D), rounded corners, fill opacity=0.3] {};
            \node[fill=gray, rotate fit=45, fit=(B) (C), rounded corners, fill opacity=0.3] {};
            
            \node at (0.5, 0.5) {$G$};
        \end{scope}
        
        \begin{scope}[shift={(2.5, 0.2)}, local bounding box=G2]
            \node[circle, draw] (A) at (0, 0) {};
            \node[circle, draw] (B) at (1, 0) {};
            \draw (A) -- (B);
            \node at (0.5, 0.5) {$G'$};
        \end{scope}
        
        \begin{scope}[shift={(2.5, -2)}, local bounding box=G3]
            \node[circle, draw, fill=color1!85] (A) at (0, 0) {};
            \node[circle, draw, fill=color2!85] (B) at (1, 0) {};
            \draw (A) -- (B);
        \end{scope}
        
        \begin{scope}[shift={(0, -2.5)}, local bounding box=G4]
            \node[circle, draw, fill=color1] (A) at (0, 0) {};
            \node[circle, draw, fill=color2] (C) at (0, 1) {};
            
            \node[circle, draw, fill=color2] (B) at (1, 0) {};
            \node[circle, draw, fill=color1] (D) at (1, 1) {};
            \draw(A) -- (B) -- (D) -- (C) -- (A);
            \node[fill=color1!85, rotate fit=45, fit=(A) (D), rounded corners, fill opacity=0.3] {};
            \node[fill=color2!85, rotate fit=45, fit=(B) (C), rounded corners, fill opacity=0.3] {};
        \end{scope}

        \draw[gray, thick, ->, shorten <= .3cm, shorten >= .3cm] (G1) -- (G2);
        \draw[gray, thick, ->, shorten <= .3cm, shorten >= .3cm] (G2) -- (G3);
        \draw[gray, thick, ->, shorten <= .3cm, shorten >= .3cm] (G3) -- (G4);
    \end{tikzpicture}
    \caption{}\label{fig:2}
    \end{subfigure}
    \caption{}
\end{figure}

To illustrate that, consider the following naive procedure for graph 2-colouring: given an $n$-vertex instance graph~$G$, if $G$ \emph{is} bipartite then return a spanning forest of $G$; if $G$ is \emph{not} bipartite then return one of the odd cycles in $G$. This simple algorithm is essentially the best possible: it outputs an instance $G'$ with at most $n$ edges, whose set of 2-colourings is exactly the same as that of $G$. However, the above approach breaks down for approximate solutions: there are 3-colourings of $G'$ that are not 3-colourings of $G$ (see \Cref{fig:1}). 

Therefore, we need a notion that allows us to turn approximate solutions of our simplified instance into the solutions of the original one. In the above example of 2-colouring, a desired outcome would be a sparse graph $G'$ that is 2-colourable if and only if $G$ is and, furthermore, any $k$-colouring of $G'$ translates into a $k$-colouring of $G$. Luckily, it is easy to see how to do this here. Suppose there exist two vertices with a common neighbour in $G$, say $x - y - z$. Then, $x$ and $z$ must be coloured identically in \emph{all} 2-colourings of $G$. Hence, we can identify $x$ with $z$; that is, we replace $x$ and $z$ with a new vertex $x'$, both in the vertex set of $G$ and the edges of~$G$. Let $G'$ be the result of applying the identification procedure iteratively for all such triples. Now, if $G$ was originally 2-colourable, so is $G'$. (Indeed, there is a 1-to-1 correspondence between the 2-colourings of $G$ and those of $G'$.) Furthermore, any $k$-colouring of $G'$ can be easily extended to a $k$-colouring of $G$: we colour each vertex of $G$ according to the colour of the vertex it was merged into in $G'$ (see \Cref{fig:2}). 

Observe that the key property of the procedure outlined above is that, since all we do is merge variables that are equal in all solutions, no constraints are deleted. There is nothing special about 2-colourings in this argument --- an analogous method, which we call a \emph{strong sparsification}, can be applied to other computational problems, including variants of graph and hypergraph colouring problems, cf.~\Cref{sec:general}.
We remark that this idea was used, under the name ``progress type 3'', by
Blum~\cite{Blum94:jacm} in his seminal paper on on approximate graph colouring.

Here, we focus on a particular generalisation of 2-colouring, namely (monotone) 1-in-3-SAT: given a set of variables $X=\{x_1, \ldots, x_n\}$, together with a set $C \subseteq X^3$ of clauses, assign values $0$ and $1$ to the variables so that for every clause $(x_i, x_j, x_k) \in C$ exactly one variable among $x_i, x_j, x_k$ is set to $1$, with the remaining two set to $0$.\footnote{A strong sparsification algorithm for monotone 1-in-3-SAT can be generically transformed into one for \emph{non-monotone} 1-in-3-SAT (i.e.~allowing negated literals), cf.~\Cref{sec:reduction}. Thus, we shall focus on monotone 1-in-3-SAT.}

\begin{definition}\label{def:ss}
    A \emph{strong sparsification algorithm for monotone 1-in-3-SAT} is an algorithm which, given an instance~$\X = (X,C)$ of monotone 1-in-3-SAT, outputs an equivalence relation $\sim$ on $X$ such that if $x_i \sim x_j$ then $x_i$ and $x_j$ have the same value in all solutions to $\X$. 
    We define the instance $\X / \mathord{\sim} = (X / \mathord{\sim},C/\mathord{\sim})$ of monotone 1-in-3-SAT as follows: $X/ \mathord{\sim}$ is the set of the equivalence classes $[x_1]_\sim, \ldots, [x_n]_\sim$, and each clause $(x_i, x_j, x_k) \in C$ induces a clause $([x_i]_\sim, [x_j]_\sim, [x_k]_\sim)$ in $C / \mathord{\sim}$.
    The \emph{performance} of the algorithm is given by the number of clauses in $\X / \mathord{\sim}$, as a function of $n=|X|$.
\end{definition}
We emphasise that the notion of strong sparsification can be defined in an analogous way for other satisfiability problems. However, since a strong sparsification is, in particular, a sparsification in the sense of the aforementioned work of Dell and van Melkebeek~\cite{Dell14:jacm}, for some classic problems of this type (3-SAT in particular), it is unlikely to obtain any non-trivial results. We focus on 1-in-3-SAT, one of the first problems for which positive sparsification results were obtained~\cite{Jansen19:toct}.

The trivial strong sparsification for monotone 1-in-3-SAT (the one that does not merge any variables) has worst-case performance $O(n^3)$. There is a slightly more clever approach that has performance $O(n^2)$ (noted e.g.~in~\cite{DBLP:journals/tcs/DroriP02}). Suppose there exist clauses $(x, y, z)$ and $(x, y, t)$. There are only 3 possible assignments to $(x, y, z, t)$: $(1, 0, 0, 0)$, $(0, 1, 0, 0)$ and $(0, 0, 1, 1)$. We see immediately that $z$ and $t$ are the same in all solutions and thus can be merged. After the exhaustive application of this rule, we get an instance in which, for every pair of variables $x, y$, there is at most one $z$ such that $(x,y,z) \in C$, so the number of clauses is $O(n^2)$ (if we interpret monotone 1-in-3-SAT instances as 3-uniform hypergraphs, such hypergraphs are called \emph{linear}). 

With this approach, the $O(n^2)$ bound is essentially tight: let $X=\{0,1\}^d$, and $C=\{(i, j, k) \mid i, j, k \in X, i \oplus j \oplus k = 0 \}$, where $\oplus$ denote the bitwise Boolean XOR operation. It is not difficult to see that, for any $i, j \in X$, there is exactly one variable $k$ (namely $k={i \oplus j}$) such that $(i,j,k)$ is a clause. Hence, the above strong sparsification does nothing and outputs the original instance with $\Theta(n^2)$ clauses. 

It is much harder to find a strong sparsification with better than quadratic performance, and in fact the existence of such an algorithm is not a priori clear. 
As our main contribution, we show that such an algorithm exists and thus improve the trivial quadratic upper bound.

\begin{restatable}[\textbf{Main}]{theorem}{thmSparsAlg} \label{thm:SparsAlg}
    There exists a polynomial-time strong sparsification algorithm for monotone 1-in-3-SAT with performance $O(n^{2 - \epsilon})$, for $\epsilon \approx 0.0028$.
\end{restatable}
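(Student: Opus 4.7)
The plan is to combine two polynomial-time reduction phases with a structural bound based on the Polynomial Freiman-Ruzsa theorem. The first reduction is the classical Drori-Peleg rule: whenever two clauses $(x,y,z)$ and $(x,y,t)$ share the pair $\{x,y\}$, merge $z$ and $t$; this is safe because the XOR relations $s(x)+s(y)+s(z) \equiv 1$ and $s(x)+s(y)+s(t) \equiv 1$ force $s(z) = s(t)$ in every solution. After this phase, every pair of variables appears in at most one clause. The second reduction is XOR-equivalence merging: compute by Gaussian elimination the affine space $V \subseteq \F_2^X$ of solutions to the linearised system $\{s(x)+s(y)+s(z) \equiv 1 \pmod 2 : (x,y,z) \in C\}$, and merge any two variables agreeing on all of $V$. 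This is safe because every 1-in-3 solution lies in $V$. If $V = \emptyset$, the instance is unsatisfiable and the algorithm collapses all variables into one class.

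After these reductions, letting $V_0$ denote the linear part of $V$, the \emph{XOR-type} map $x \mapsto (v(x))_{v \in V_0}$ embeds the surviving variables as distinct vectors in $\F_2^d$ with $d = \dim V_0$, each clause $(x,y,z)$ becomes a Schur triple $v_x + v_y + v_z = 0$, and any fixed 1-in-3 solution $s_0 \in V$ induces a partition $A = A_0 \sqcup A_1$ in which each clause contains exactly one element of $A_1$ and two of $A_0$. Different clauses determine different pairs $\{a,a'\} \subseteq A_0$ (since $a+a' \in A_1$ is then forced), so the clause count is at most $\tfrac12 N$, where $N := |\{(a,a') \in A_0^2 : a+a' \in A_1\}|$. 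Supposing for contradiction that $N \geq C n^{2-\epsilon}$, a Balog-Szemer\'edi-Gowers argument on the bipartite additive energy between $A_0$ and $A_1$ extracts a subset $A_0' \subseteq A_0$ of density $n^{-O(\epsilon)}$ with small doubling $|A_0' + A_0'| \leq K |A_0'|$, where $K = n^{O(\epsilon)}$. The Polynomial Freiman-Ruzsa theorem of Gowers-Green-Manners-Tao then places $A_0'$ inside a subspace $H \subseteq \F_2^d$ with $|H| \leq K^{c_{\mathrm{PFR}}}|A_0'|$; balancing the BSG density loss against the PFR exponent yields the stated $\epsilon \approx 0.0028$.

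The principal obstacle is converting the algebraic conclusion $A_0' \subseteq H$ into a contradiction with the fully-reduced state of the instance. The intended mechanism is that within the coset $A_0'$ the combined XOR- and 1-in-3-constraints are sufficiently rigid to force an XOR-equivalence between two variables that Phase II should already have identified. Making this precise will likely require adding a third polynomial-time reduction — an \emph{affine-span detection} step which, upon finding a dense structured subset of $A_0$ whose affine span in $\F_2^d$ is too small (matching the structure predicted by PFR), performs an additional safe merging — together with a careful analysis of the 1-in-3 solutions restricted to the instance inside $H$. This coset-handling step, meshing the PFR exponent with the combinatorial rigidity of the 1-in-3 constraint, will be the subtle technical core of the argument, and will determine the precise numerical value of $\epsilon$.
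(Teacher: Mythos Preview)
Your two reduction phases amount to exactly the paper's twin-merging step (Phase I is subsumed by Phase II, since two clauses sharing a pair make the third variables XOR-equivalent). The paper shows that twin-merging alone is \emph{not} enough to get below $n^2$: the instance $X=\F_2^k$ with a clause for every Schur triple is twin-free (the XOR-type embedding is the identity) yet has $\Theta(n^2)$ clauses. In your language, this is precisely the situation where BSG+PFR locate a dense subset inside a small subspace $H$ --- but nothing contradicts it, because the instance \emph{is} living densely in a subspace and yet no two variables are XOR-equivalent. So the step you flag as ``the principal obstacle'' is not a technicality to be cleaned up; it is where the whole argument currently fails.

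The missing ingredient is a further polynomial-time reduction that does not need a 1-in-3 solution. Your proposed ``affine-span detection'' uses the partition $A_0\sqcup A_1$ coming from a fixed solution $s_0$, which you cannot compute. What the paper does instead is define, for the twin-free instance embedded in $\F_2^d$, the relation $x\succeq y$ meaning ``some even-size subset of the \emph{neighbours} of $x$ sums to $\alpha(y)$''. This is decidable by linear algebra, and one checks (switching to 2-in-3-SAT for convenience) that $x\succeq y$ forces $\hat x\ge\hat y$ in every solution; hence any $\succeq$-cycle can be safely merged. After this cycle-merging, a topological sort $x_1,\dots,x_n$ of $\succeq$ gives exactly the condition $v_j\notin\langle N_i+N_i\rangle$ for $j<i$ on the neighbourhoods $N_i$, and it is \emph{this} condition, together with $v_i+N_i=N_i$, that the additive-combinatorics theorem (BSG+PFR plus a polynomial-method bound on the model case $V=\F_2^d$) turns into $\sum_i|N_i|=O(n^{2-\epsilon})$. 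Your BSG+PFR intuition is right, but it needs to be applied to a system satisfying this acyclicity condition, and producing that condition requires the $\succeq$-cycle reduction, not just XOR-twin merging.
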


A proof of \Cref{thm:SparsAlg} can be found in \Cref{sec:algorithm}. The main technical ingredient is the following theorem, which could be of independent interest. It is proved in \Cref{sec:addComb} using tools from additive combinatorics.

\begin{restatable}{theorem}{thmAddComb}\label{thm:AddComb}
    Fix $n, d$. Consider $V = \{ v_1, \ldots, v_n \}\subseteq \F_2^d$ and let $N_1, \ldots, N_n \subseteq V$ satisfy
    \begin{enumerate}[label=(\roman*)]
        \item for all $i\in[n]$, $v_i+N_i=N_i$, and
        \item for all $i\in[n]$, $v_1,v_2\dots,v_{i-1}\notin \langle N_i+N_i\rangle$.
    \end{enumerate}
    Then $\sum_{i = 1}^n |N_i| = O(n^{2 - \epsilon})$ for $\epsilon \approx 0.0028$.
\end{restatable}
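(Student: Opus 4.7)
My plan is to combine the Polynomial Freiman--Ruzsa theorem (PFR) of Gowers, Green, Manners and Tao with a doubling-based dichotomy and double-counting exploiting condition (ii).

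\textbf{Structural setup.} For each $i$, set $W_i := \langle N_i + N_i \rangle \leq \F_2^d$. Condition (i) immediately gives $v_i \in N_i+N_i\subseteq W_i$: for any $x\in N_i$, also $v_i+x\in N_i$, so $x+(v_i+x)=v_i$ lies in $N_i+N_i$. Combined with (ii) this says $W_i$ contains $v_i$ but avoids $v_1,\dots,v_{i-1}$; in particular the $W_1,\dots,W_n$ are pairwise distinct (if $W_i=W_j$ for $i<j$, then $v_i\in W_j$, contradicting (ii) at $j$). Finally, since $N_i+N_i\subseteq W_i$, all of $N_i$ lies in a single coset $a_i+W_i$, so $|N_i|\leq |V\cap(a_i+W_i)|$.

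\textbf{PFR input.} Let $K_i:=|N_i+N_i|/|N_i|$. The PFR theorem of Gowers, Green, Manners and Tao provides an absolute constant $C$ such that $N_i$ is contained in a coset of a subspace of size at most $K_i^{C}|N_i|$; since any such subspace must contain $W_i=\langle N_i+N_i\rangle$, we obtain
\[
|W_i|\;\leq\;K_i^{C}\,|N_i|,\qquad\text{equivalently}\qquad |N_i+N_i|\;\geq\;|W_i|^{1/C}\,|N_i|^{1-1/C}.
\]

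\textbf{Doubling dichotomy.} Pick a threshold $n^{\tau}$ and split $[n]=\mathcal S\sqcup\mathcal L$ according to whether $K_i\leq n^{\tau}$ or $K_i>n^{\tau}$. For $i\in\mathcal L$ one uses $|N_i|<n^{-\tau}|N_i+N_i|$ and bounds $\sum_{i\in\mathcal L}|N_i+N_i|$ by a double count over $\F_2^d$: each $v_\ell\in V$ can lie in $N_i+N_i$ only for $i\leq \ell$ (by (ii), since any such containment forces $v_\ell\in W_i$), contributing $O(n^{2})$ from the $V$-portion; the remainder (sums in $W_i\setminus V$) is handled by a pigeonhole over the $W_i$ using the PFR bound on $|W_i|$ to keep the subspaces small. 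This gives a contribution of order $n^{2-\tau}$. For $i\in\mathcal S$, $|W_i|\leq n^{\tau C}|N_i|$, so $N_i$ is a subset of density at least $n^{-\tau C}$ in the coset $a_i+W_i$; the pairwise distinctness of the $W_i$ and the constraint $W_i\cap\{v_1,\dots,v_{i-1}\}=\varnothing$ then allow one to bound $\sum_{i\in\mathcal S}|N_i|$ by an averaging argument over cosets of the distinct $W_i$.

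\textbf{Optimisation and obstacle.} Tuning $\tau$ to balance the two regimes yields $\sum_i|N_i|=O(n^{2-\epsilon})$ with $\epsilon=\Theta(1/C)$; plugging in the explicit PFR constant should give $\epsilon\approx 0.0028$. The delicate step is the small-doubling regime: turning the qualitative PFR output ("$N_i$ is dense in a coset of $W_i$") into a quantitative bound on $\sum_{i\in\mathcal S}|N_i|$ requires carefully tracking how cosets of the different subspaces $W_i$ can collectively cover $V$ while respecting the asymmetric ordering imposed by (ii). I expect a further sub-dichotomy on $\dim W_i$ (or an iterated density-increment / cleaning step) will be necessary to close this gap and extract the claimed power saving.
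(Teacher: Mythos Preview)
Your ``PFR input'' step contains a genuine error: the Polynomial Freiman--Ruzsa theorem does \emph{not} guarantee that a set $N$ with doubling $K$ is contained in a single coset of a subspace of size $\leq K^{C}|N|$, and in particular does not bound $|\langle N+N\rangle|$ polynomially in $K$. What GGMT prove is the covering version: $N$ is covered by $O(K^{9})$ cosets of a subspace $G$ with $|G|\leq|N|$. This subspace $G$ need not contain $N+N$, and passing to the span of the coset representatives costs a factor that is exponential in the number of cosets. A concrete counterexample to your deduction: take $N=\{0,e_1,\dots,e_m\}\subseteq\F_2^m$. Here $|N|=m+1$, $|N+N|=1+m+\binom{m}{2}$ so $K\asymp m$, yet $\langle N+N\rangle=\F_2^m$ has size $2^m$, which is exponential in $K$. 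Thus the inequality $|W_i|\leq K_i^{C}|N_i|$ is simply false, and with it your small-doubling branch collapses (the density $n^{-\tau C}$ you want to exploit is not available). The large-doubling branch is also only sketched: the phrase ``the remainder (sums in $W_i\setminus V$) is handled by a pigeonhole over the $W_i$'' hides exactly the difficulty, since $N_i+N_i$ need not sit inside $V$ and you have no control on $|W_i|$.

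The paper's argument is structurally quite different. PFR (preceded by Balog--Szemer\'edi--Gowers) is applied once, to the ambient set $V$ rather than to the individual $N_i$: the hypothesis $\sum_i|N_i|\geq n^{2-\epsilon}$ forces $E_3(V)$ and hence $E_4(V)$ to be large, so BSG+PFR produce a subspace $H$ with $|H|\leq 2n$ and $|V\cap H|\gtrsim n^{1-O(\epsilon)}$. The problem is then split along cosets of $H$; the ``inside $H$'' pieces are controlled by a separate \emph{polynomial-method} bound for the model case $V=\F_2^d$ (yielding $O(n^{\log_2 3})$), which your proposal has no analogue of, and the ``outside $H$'' remainder is handled by induction on $n$. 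The final exponent $\epsilon\approx(1-\tfrac12\log_23)/74$ comes from balancing the BSG+PFR losses against this $\log_23$ exponent, not from a doubling threshold on the $N_i$.
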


Note that the answer is polynomial in $n$ since, for example, a linear lower bound is achieved by taking $N_i=\{0,v_i\}$.
In fact, in the remark following \Cref{prop:model}, we provide an example where $\sum_i|N_i|=\Omega(n^{\log_23})$.
Note also that, since $|N_i|\leq n$, there is a trivial bound
$\sum_i|N_i|\leq n^2$. Hence, our contribution consists in improving the trivial bound by a polynomial saving of $n^\epsilon$. 

We also show, in~\Cref{sec:general}, that no algorithm (even one with exponential
runtime) can output a strong sparsifier with performance $o(n^{1.725\ldots})$.
This is because there are instances of monotone 1-in-3-SAT with $n$ variables
and $\Omega(n^{1.725\ldots})$ constraints in which no merges of any two variables are
possible.

\paragraph{Application} 

As an application of our result, we improve the state-of the-art approximation of hypergraph colourings.
There are several different notions of colourings for hypergraphs, the classic
one being nonmonochromatic colourings~\cite{DRS05}. We shall focus on
\emph{linearly-ordered} (LO) colourings~\cite{Barto21:stacs}, also known as \emph{unique-maximum} colourings~\cite{Cheilaris13:sidma}: the colours are taken from a linearly-ordered set, such as the integers, with the requirement that the maximum colour in each hyperedge is unique. 

Notice that finding an LO 2-colouring of a 3-uniform hypergraph $H$ is precisely the same problem as monotone 1-in-3-SAT (interpret the clauses of an instance as edges of $H$, and take the order $0 < 1$). Hence, our strong sparsification algorithm from \Cref{thm:SparsAlg} also applies to LO 2-colouring 3-uniform hypergraphs.
Thus, we can improve the state-of-the-art algorithms for
\emph{approximate LO colouring}, where we are given an LO
2-colourable 3-uniform hypergraph $H$, and are asked to find an LO-colouring with as few colours as possible. The best known algorithm so far is the following, due to H{\aa}stad, Martinsson, Nakajima and {\v{Z}}ivn{\'{y}}.

\begin{theorem}[\protect{\cite[Theorem~1]{HMNZ24_logarithmic}}]
There is a polynomial-time algorithm that, given an $n$-vertex 3-uniform LO
  2-colourable hypergraph $H$ with $n \geq 4$,\footnotemark{} returns an LO $(\log_2
  n)$-colouring of $H$.
\end{theorem}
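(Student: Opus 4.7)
The plan is a recursive polynomial-time algorithm that at each level selects a set $T \subseteq V$, assigns to all of $T$ the current maximum color, and recurses on $V \setminus T$ with one fewer color. To achieve $\log_2 n$ colors in total, the reduction must satisfy $|V \setminus T| \leq n/2$ at every level, i.e. $|T| \geq n/2$. The set $T$ must be \emph{strongly independent} in $H$: every hyperedge contains at most one vertex of $T$. This ensures that every hyperedge meeting $T$ has its unique maximum at its single vertex of $T$, while edges entirely in $V \setminus T$ are handled by the recursive call (the restriction $H[V \setminus T]$ inherits LO 2-colourability from any 2-coloring of $H$, so the recursion is well-defined). The base case $n = 4$ can be handled by brute force within $2 = \log_2 4$ colors.

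The core combinatorial claim is that every LO 2-colourable 3-uniform hypergraph contains a strongly independent set of size at least $n/2$. Fixing any LO 2-colouring $(V_0, V_1)$ of $H$, the set $V_1$ is automatically strongly independent because each hyperedge contains exactly one vertex of $V_1$; if $|V_1| \geq n/2$, taking $T = V_1$ suffices. In the complementary case $|V_0| > n/2$, each hyperedge contributes a pair to the auxiliary graph $G$ on $V_0$ whose edges are $\{\{v, w\} : \exists u,\ \{u, v, w\} \in E\}$, and an independent set in $G$ pulls back to a strongly independent subset of $V_0$ in $H$. One has to argue that at least one of these two sources always yields a set of size $\geq n/2$, e.g.\ by showing that the density of $G$ is controlled by $|V_1|$ so that a degree-sequence or greedy argument produces a sufficiently large independent set whenever $|V_1|$ is small.

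I expect the main obstacle to be algorithmic rather than combinatorial: since LO 2-colouring is equivalent to monotone 1-in-3-SAT and hence \NP-hard, the algorithm cannot compute $(V_0, V_1)$ explicitly to read off $T$. Instead, it must produce $T$ directly from $H$ under the LO 2-colourability promise alone. A natural route is to solve the LP relaxation with constraints $\sum_{v \in e} x_v = 1$ and $x_v \in [0, 1]$ for each edge $e$, which is feasible by the promise, and then to extract $T$ by randomized rounding of the fractional $x_v$ followed by a combinatorial clean-up step that removes the few vertices witnessing a violation of strong independence. The delicate point is to ensure that this clean-up loses only an $o(1)$ fraction of vertices, so that the halving $|T| \geq n/2$ is preserved and the recursion closes to give exactly $\log_2 n$ colors.
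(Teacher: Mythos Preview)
This theorem is not proved in the present paper; it is quoted from~\cite{HMNZ24_logarithmic}, so there is no in-paper argument to compare against. That said, your proposed route has a genuine gap.

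Your core combinatorial claim --- that every LO $2$-colourable $3$-uniform hypergraph on $n$ vertices contains a strongly independent set of size at least $n/2$ --- is false. Take $V=\{u,v_1,\ldots,v_{n-1}\}$ with edge set $\bigl\{\{u,v_i,v_j\}:1\le i<j\le n-1\bigr\}$. This hypergraph is LO $2$-colourable (colour $u$ with $1$ and each $v_i$ with $0$), yet every set $T$ with $|T|\ge 2$ fails strong independence: whether $T$ contains $u$ together with some $v_i$, or two distinct $v_i,v_j$, the edge $\{u,v_i,v_j\}$ meets $T$ twice. Hence the largest strongly independent set has size~$1$. Your hope that ``the density of $G$ is controlled by $|V_1|$'' breaks down here too: $|V_1|=1$ while the auxiliary graph $G$ on $V_0$ is complete. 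No LP rounding or clean-up step can manufacture a set that does not exist, so the halving recursion cannot close.

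The algorithm of~\cite{HMNZ24_logarithmic} goes a different way, exploiting the \emph{linear} structure of $1$-in-$3$-SAT over $\F_2$ --- the same structure this paper uses in Section~\ref{sec:algorithm} (the system $A_\X$ and the map $\alpha$). The mod-$2$ relaxation is solvable exactly in polynomial time by Gaussian elimination, which bypasses the $\NP$-hardness obstacle you correctly identified, and the recursion is driven by this affine information rather than by independent-set extraction. As a sanity check, even the companion \Cref{thm:oldAlg} combined with the elementary $O(n^2)$ sparsifier described in the introduction already yields $\log_2 n + O(1)$ colours via this route.
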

\footnotetext{Assumptions like this are to make sure that $\log_2 n \geq 2$. We will have similar assumptions for other algorithms.}

Using our sparsification algorithm, we improve this to the following. 

\begin{restatable}{corollary}{coralg}\label{cor:alg}
There is a polynomial-time algorithm that, given an $n$-vertex 3-uniform LO 2-colourable hypergraph $H$ with $n \geq 5$, returns an LO $(0.999 \log_2 n)$-colouring of $H$. 
\end{restatable}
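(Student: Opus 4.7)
The plan is to combine the strong sparsification algorithm of \Cref{thm:SparsAlg} with the HMNZ logarithmic LO-colouring algorithm.

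First, I would view $H$ as a monotone 1-in-3-SAT instance, with its hyperedges as clauses, and apply the strong sparsification algorithm. By \Cref{thm:SparsAlg}, this yields, in polynomial time, an equivalence relation $\sim$ on $V(H)$ such that the reduced hypergraph $H / \sim$ has at most $n$ vertices and $m = O(n^{2-\epsilon})$ hyperedges. Because $\sim$ identifies only vertices that agree in every LO 2-colouring of $H$, the hypergraph $H / \sim$ is LO 2-colourable; moreover, any LO $k$-colouring of $H / \sim$ lifts to an LO $k$-colouring of $H$ by assigning each vertex the colour of its equivalence class. In particular, if an edge of $H$ contains two vertices from the same class, the unique-maximum property on the image edge in $H / \sim$ forces the maximum to be realised on the third vertex, so the lift remains valid.

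Next, I would apply a sparsity-sensitive refinement of the HMNZ algorithm to $H / \sim$. The HMNZ bound of $\log_2 n$ colours arises from a recursive procedure that introduces one colour per level of the recursion; rerunning the same analysis while keeping the number of hyperedges $m$ as an explicit parameter should yield an LO colouring with approximately $(1 - \tfrac{\epsilon}{2})\log_2 n + O(1)$ colours when $m = O(n^{2-\epsilon})$. For $n \geq 5$ and $\epsilon \approx 0.0028$, this is at most $0.999 \log_2 n$. Lifting this colouring through $\sim$ then gives the desired LO $(0.999 \log_2 n)$-colouring of $H$.

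The main obstacle is establishing the edge-sensitive sharpening of HMNZ with the right constants. This amounts to reopening the HMNZ recursion while keeping $m$ explicit and showing that sparser hypergraphs permit larger ``single-colour'' layers at each recursive step, converting the polynomial saving $n^{-\epsilon}$ in the edge count into an $\Omega(\epsilon \log_2 n)$ saving in the colour count. The slack for small $n$, together with the $O(1)$ additive term, is absorbed using the hypothesis $n \geq 5$, so that the final bound $0.999 \log_2 n$ holds unconditionally in the stated range.
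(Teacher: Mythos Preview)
Your approach is essentially the same as the paper's, but you are making the second step harder than necessary. The ``sparsity-sensitive refinement'' of HMNZ that you flag as the main obstacle is already available as a black-box: \cite[Theorem~3]{HMNZ24_logarithmic} (cited in the paper as \Cref{thm:oldAlg}) states directly that an LO $2$-colourable $3$-uniform hypergraph with $m$ edges can be LO-coloured with $2 + \tfrac{1}{2}\log_2 m$ colours in polynomial time. Plugging $m = O(n^{2-\epsilon})$ into this gives $O(1) + \tfrac{2-\epsilon}{2}\log_2 n$ colours with no need to reopen the recursion.

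One small correction: you cannot absorb the additive $O(1)$ into the gap between $\tfrac{2-\epsilon}{2} \approx 0.9986$ and $0.999$ for \emph{all} $n \geq 5$; that only works once $n$ exceeds some (possibly large) constant threshold. The paper handles the remaining finitely many $n$ by brute-forcing an LO $2$-colouring, which for $n \geq 5$ is already an LO $(0.999\log_2 n)$-colouring.
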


We will use the following result from~\cite{HMNZ24_logarithmic}. Since its performance depends on the number of edges in the input, it benefits from our sparsification scheme.

\begin{theorem}[\protect{\cite[Theorem~3]{HMNZ24_logarithmic}}]\label{thm:oldAlg}
There is a polynomial-time algorithm that, given a 3-uniform LO 2-colourable hypergraph $H$ with $m$ edges, returns an LO $(2 + \frac{1}{2}\log_2 m)$-colouring of $H$.
\end{theorem}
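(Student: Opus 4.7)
The plan is to pipeline the strong sparsification of \Cref{thm:SparsAlg} through the existing logarithmic LO colouring algorithm of \Cref{thm:oldAlg}, and to handle the small-$n$ regime by brute force.

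First I would view the input 3-uniform hypergraph $H$ as a monotone 1-in-3-SAT instance and apply \Cref{thm:SparsAlg}, obtaining an equivalence relation $\sim$ together with the quotient instance $H/\mathord{\sim}$ containing at most $C\, n^{2-\epsilon}$ clauses for some absolute constant $C > 0$ and $\epsilon \approx 0.0028$. Next I would run \Cref{thm:oldAlg} on $H/\mathord{\sim}$ to produce an LO $k$-colouring $c$ with
\[
    k \ \leq\ 2 + \tfrac{1}{2}\log_2\bigl(C\, n^{2-\epsilon}\bigr)\ =\ 2 + \tfrac{1}{2}\log_2 C + \bigl(1 - \tfrac{\epsilon}{2}\bigr)\log_2 n .
\]
Finally I would lift back to $H$ by setting $c'(x) \coloneq c([x]_\sim)$ for every vertex $x$; since $\sim$ merges only variables that agree in every 1-in-3-SAT solution of $H$, each hyperedge $\{x,y,z\}$ of $H$ corresponds to a (multi-)edge of $H/\mathord{\sim}$ whose unique-maximum colour under $c$ translates directly into a unique maximum of $\{c'(x),c'(y),c'(z)\}$ on the original hyperedge.

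The key arithmetic is then to show that
\[
    2 + \tfrac{1}{2}\log_2 C + \bigl(1 - \tfrac{\epsilon}{2}\bigr)\log_2 n \ \leq\ 0.999 \log_2 n
\]
holds whenever $n \geq n_0$, for a fixed threshold $n_0$. Rearranging, this is equivalent to $\log_2 n \geq (2 + \tfrac{1}{2}\log_2 C)/(\tfrac{\epsilon}{2} - 0.001)$, and since $\epsilon/2 - 0.001 \approx 0.0004 > 0$ given $\epsilon \approx 0.0028$, the right-hand side is a fixed constant, so $n_0$ exists. For $5 \leq n \leq n_0$ I would exploit that $n_0$ is a fixed constant and use brute force: enumerate all $2^n \leq 2^{n_0}$ Boolean assignments and return a satisfying 1-in-3-SAT assignment (which exists by the LO 2-colourability assumption on $H$). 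This yields an LO 2-colouring, and $2 \leq 0.999 \log_2 n$ for every $n \geq 5$ since $0.999 \log_2 5 > 2$.

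The main obstacle I expect is justifying the lifting step carefully: the quotient $H/\mathord{\sim}$ may contain multi-edges (when two of the three vertices of an original hyperedge happen to be merged by $\sim$), and one must check that \Cref{thm:oldAlg}, or a mild adaptation, still provides a colouring compatible with such multi-edges. A multi-edge $\{[x]_\sim,[x]_\sim,[z]_\sim\}$ forces $[x]_\sim = 0$ and $[z]_\sim = 1$ in any 1-in-3-SAT solution of $H$, so I would address this by pre-extracting such forced classes, reserving the minimum and maximum colours for them, and invoking \Cref{thm:oldAlg} on the residual simple 3-uniform hypergraph; the resulting $O(1)$ additive overhead in $k$ is absorbed by the slack $\epsilon/2 - 0.001 > 0$ for $n \geq n_0$.
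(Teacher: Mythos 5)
The statement you were handed, \cref{thm:oldAlg}, is an external result cited from [HMNZ24, Theorem~3] that this paper does not reprove — and indeed your plan invokes \cref{thm:oldAlg} as a black box, so your proposal cannot be a proof of it; what you have actually sketched is the paper's proof of \cref{cor:alg}. Measured against that target, your approach is essentially identical to the paper's: sparsify via \cref{thm:SparsAlg}, run \cref{thm:oldAlg} on the quotient $H/\mathord{\sim}$, lift the colouring back to $H$ by $c'(x)\coloneq c([x]_\sim)$, and dispose of constantly many small $n$ by brute force, with the same $O(1)+\bigl(1-\tfrac{\epsilon}{2}\bigr)\log_2 n \leq 0.999\log_2 n$ arithmetic. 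The one place you go beyond the written proof is the concern about degenerate hyperedges in $H/\mathord{\sim}$ when $\sim$ identifies two vertices of a common edge: the paper hands $H/\mathord{\sim}$ to \cref{thm:oldAlg} without verifying 3-uniformity, whereas you flag the issue and give a sound local repair (peel off the forced classes, reserve the extreme colours for them, apply \cref{thm:oldAlg} to the honest 3-uniform residual), with the $O(1)$ additional colours absorbed by the slack $\tfrac{\epsilon}{2}-0.001>0$ exactly as you observe. That is a real, if minor, omission in the paper, and your proposed fix closes it cleanly.
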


\begin{proof}[Proof of \Cref{cor:alg}]
    Recall that a hypergraph $H = (V, E)$ can be interpreted as an instance $\X=(V,E)$ of monotone 1-in-3-SAT, where $V$ is the set of variables, the clauses of $\X$ are the edges of $H$, and any solution to $\X$ correspond to an LO 2-colouring of $H$ and vice-versa.
   Thus, using \Cref{thm:SparsAlg}, we compute an equivalence relation $\sim$ on $V$ so that, if $u \sim v$, then $u$ and $v$ get the same colour in any LO 2-colouring; and $H' = H /\mathord{\sim}$ has $O(n^{2 - \epsilon})$ edges for $\epsilon \approx 0.0028$. Since $H$ is LO 2-colourable, $H'$ is as well (and, in fact, the LO 2-colourings of $H$ and $H'$ are in a 1-to-1 correspondence). 
    
    Next, using \Cref{thm:oldAlg}, find an LO $(2 + \frac{1}{2} \log_2( O(n^{2 - \epsilon})))$-colouring of $H'$, i.e.~an LO colouring with $O(1) + \frac{2 - \epsilon}{2} \log_2 n$ colours. Note that $(2 - \epsilon) / 2 \approx 0.9986$, and so, for $n$ larger than some constant, we have $O(1) + \frac{2 - \epsilon}{2} \log_2 n \leq 0.999 \log_2 n$; whereas, for $n$ smaller than a constant, we can find an LO 2-colouring by brute force. 
    Since for every $n \geq 5$ any LO 2-colouring is in particular an LO $(0.999 \log_2 n)$-colouring, in all cases we have found an LO $(0.999 \log_2 n)$-colouring of $H'$. Note that any LO colouring of $H' = H / \mathord{\sim}$ immediately gives rise to an LO colouring of $H$, by assigning each vertex of $H$ the colour given to its equivalence class in $H / \mathord{\sim}$. 
\end{proof}

Finally, we remark that the introduced notion of strong sparsification might be
worth exploring for other satisfiability problems and CSPs. 
This could unveil a new and
exciting line of work, going beyond the results of the present article. 
We start this exploration in~\Cref{sec:general}, where we obtain bounds for some
CSPs, including monotone 1-in-$k$-SAT and, more generally, $\ell$-in-$k$-SAT,
Not-All-Equal-$k$-SAT, graph $k$-colouring, and systems of linear equations.

\paragraph{Paper structure}
\Cref{sec:addComb} gives a proof of the main technical result, \Cref{thm:AddComb}. 
\Cref{sec:algorithm} gives a proof of our sparsification algorithm, \Cref{thm:SparsAlg}.
\Cref{sec:reduction} shows that monotone strong sparsification implies non-monotone strong sparsification.
Finally, \Cref{sec:general} shows generalisations of our work to other CSPs and studies the mere existence of strong sparsifiers, independently of the allowed time to find them.

\paragraph{Acknowledgements} We thank the anonymous reviewers of FOCS 2025 for
their very useful and detailed feedback on an extended abstract of this work. We thank Alexandru Pascadi for introducing the authors to each other. We also thank Rare\cb{s}-Darius Buhai (aided by chatGPT) for pointing out the paper~\cite{Kane17:ejc}.

\section{Additive combinatorics}\label{sec:addComb}
In this section we prove \Cref{thm:AddComb}. We reformulate it into an equivalent, notationally more convenient statement as follows.
\begin{theorem}\label{th:polysaving}
     Let
    \[
     \myC(n)\ \coloneq\ \max_{V,N_i}\left\{\sum_{i=1}^n|N_i| \bigmid V\subseteq\F_2^d, |V|=n, \text{ and $N_1,\dots,N_n\subseteq V$ satisfy (i), (ii) of \Cref{thm:AddComb}\ }\right\}.
    \]
    Then, $\myC(n)=O(n^{2-\epsilon})$ for $\epsilon=0.0028$. 
\end{theorem}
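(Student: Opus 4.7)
My plan starts with a dyadic decomposition of the indices. Partition $\{1,\dots,n\}$ into $O(\log n)$ classes $I_k=\{i:2^k\le|N_i|<2^{k+1}\}$ and, up to a logarithmic loss, focus on a single class dominating the sum. Writing $K=2^k$, it suffices to bound $|I_k|\cdot K=O(n^{2-\epsilon})$. If $K\le n^{1-\alpha}$ for a small parameter $\alpha$ to be chosen, then the trivial bound $|I_k|\le n$ already yields this. So the substantive case is the large regime $K\ge n^{1-\alpha}$, where I must prove $|I_k|\le n^{1-\epsilon'}$ for a suitable $\epsilon'>0$.

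In that regime, for each relevant $i$, set $H_i=\langle N_i+N_i\rangle$. Because $\F_2$ has characteristic $2$, $N_i-N_i=N_i+N_i\subseteq H_i$, so $N_i$ lies inside a single coset of $H_i$, and $|H_i|\ge|N_i|\ge K$. From (i), for any $x\in N_i$ we have $x+(x+v_i)=v_i\in N_i+N_i$, whence $v_i\in H_i$; while (ii) forces $v_1,\dots,v_{i-1}\notin H_i$. Each large $N_i$ therefore determines a subspace $H_i$ that captures $v_i$ but avoids every earlier $v_j$, and this sequential avoidance is the structural feature I intend to exploit.

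To get a polynomial saving I would deploy the Polynomial Freiman--Ruzsa theorem of Gowers--Green--Manners--Tao. Applied to $N_i$ (with a Balog--Szemer\'edi--Gowers style preprocessing if a direct doubling bound is unavailable) it should produce a subspace $T_i\le H_i$ of size roughly $K^{1+\delta}$ such that $N_i$ is covered by few cosets of $T_i$. This turns each large $N_i$ into a compact combinatorial ``witness subspace''. A double-counting or iterative removal argument then uses (ii): if too many $i\in I_k$ had comparable witnesses, an earlier $v_j$ would end up inside $H_i$ for some later $i$, contradicting (ii). Optimising $\alpha$ against the PFR loss $\delta$ (and the logarithmic overhead from dyadic decomposition) should yield the explicit $\epsilon\approx 0.0028$, whose smallness reflects the tightness of the balance.

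The principal obstacle I foresee is engineering a usable doubling bound on $N_i$ so that PFR can be applied with a meaningful constant: the hypotheses only give $|N_i+N_i|\le|H_i|$, and $|H_i|$ could a priori be as large as $|N_i|^2/2$. Extracting a dense sub-configuration of $N_i$ with controlled sumset, in a way that preserves enough of the ordered witness structure for (ii) to remain effective, is what I expect to be the technical heart of the argument, and it is also the step on which the precise exponent $\epsilon$ depends.
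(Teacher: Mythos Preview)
Your plan has a genuine gap, and it is exactly the one you flag at the end: you apply Balog--Szemer\'edi--Gowers and PFR to each individual $N_i$, but you have no doubling (or energy) hypothesis on $N_i$ whatsoever. Condition (i) only says $N_i$ is a union of cosets of $\langle v_i\rangle$, which places no constraint at all on $|N_i+N_i|$; your witness subspaces $T_i$ need not exist with any useful parameters. The subsequent ``double-counting or iterative removal'' step is also not a proof: it is not clear why many comparable $T_i$'s would force some earlier $v_j$ into a later $H_i$, since the $H_i$'s can be completely unrelated subspaces and (ii) is a one-sided condition.

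The paper's argument fixes both issues by applying BSG and PFR to the \emph{ambient set} $V$, not to the $N_i$. The key observation you are missing is that condition (i) manufactures additive structure in $V$ itself: for every $v_k\in N_j$ the triple $(v_j,v_k,v_j+v_k)$ lies in $V^3$, so $E_3(V)\ge\sum_i|N_i|$, and by Cauchy--Schwarz $E_4(V)\ge n^{3-2\epsilon}$ whenever $\sum_i|N_i|\ge n^{2-\epsilon}$. BSG then gives a large $A\subseteq V$ with small doubling, and PFR yields a subspace $H$ with $|V\cap H|\ge\Omega(n^{1-73\epsilon})$. One then splits each $N_i$ into its part inside $\langle H,v_i\rangle$ and its complement, bounds the structured parts via a separate polynomial-method result for the case $V=\F_2^d$ (giving $O(n^{\log_2 3})$ there), and handles the complementary part by induction on $n$. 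This structured base case, proved by constructing multilinear polynomials $F_i$ that are triangular with respect to the ordering from (ii), is a second essential ingredient entirely absent from your outline; without it there is nothing for the PFR step to reduce to.
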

(We note in passing that in fact we show
$\myC(n) \leq 2^{548} n^{2 - \frac{1}{74}\log_4(4/3)}$.)
 
Our proof of this bound employs two prominent results from the area of additive combinatorics, namely the Balog-Szemer\'edi-Gowers and Polynomial Freiman-Ruzsa Theorems. In particular, the value of $\epsilon$ that we obtain with this approach depends (essentially linearly) on the strongest known constants in these theorems. 
To state them, we need to introduce some standard concepts from additive combinatorics.

For a set $A\subseteq \F_2^d$ and an integer $k$, we define its \emph{$k$-energy} 
\[
    E_k(A)\coloneq \#\{(a_1,a_2,\dots,a_k)\in A^k\mid a_1+a_2+\dots+a_k=0\}.
\]
Note that if $r_A(x)\coloneq\#\{(a_1,a_2)\in A^2\mid a_1+a_2=x\}$, then we can equivalently write $E_4(A)=\sum_{x\in\F_2^d}r_A(x)^2$.\footnote{We remark that $E_4(A)=\#\{(a_1,a_2,a_3,a_4)\in A^4\mid a_1+a_2=a_3+a_4\}$ is commonly known as the \emph{additive energy} of $V$ in the additive combinatorics community.} It may be helpful to keep in mind the trivial upper bound $|E_k(A)|\leq |A|^{k-1}$ which holds since after choosing $a_1,\dots,a_{k-1}$, the element $a_k$, if exists, is fixed by the equation. We also define the \emph{sumset} $A+A\coloneq\{a_1+a_2 \mid a_j\in A\}$, and the ratio $|A+A|/|A|$, known as the \emph{doubling constant} of $A$. Again, there are the trivial bounds $|A|\leq |A+A|\leq |A|^2$. One should think of sets with ``large energy'' (say $E_4(A)\geq |A|^3/K$) and sets with ``small doubling'' (say $|A+A|\leq K|A|$) as being highly additively structured. The latter notion is strictly stronger, and it is not hard to show that a set $B$ with doubling constant $L$ automatically satisfies $E_4(B)\geq |B|^3/L$. 

For a set $A \subseteq \F^d_2$ we denote by $\langle A \rangle$ the smallest subspace that contains all elements of $A$.
If $A=\{a_1,\ldots,a_q\}$, we omit the internal brackets and write $\langle a_1,\ldots,a_q \rangle$ instead of $\langle \{a_1,\ldots,a_q\}\rangle$.

The following Balog-Szemer\'edi-Gowers Theorem is a standard result in additive combinatorics~\cite{Balog94:comb,Gowers98:gafa}, and provides a partial converse. Roughly speaking, it states that a set with large
  additive energy contains a rather large subset with small doubling. We will use a recent version due to Reiher and Schoen with the current best dependence on $K$.
\begin{theorem}[Balog-Szemer\'edi-Gowers
  Theorem~\cite{Reiher24:comb}]
    Let $K\geq 1$  and $\epsilon \in (0, \frac{1}{2})$ and let $B \subseteq \F^d_2$ have additive energy $E_4(B)\geq |B|^3/K$. Then there exists a subset $B'\subseteq B$ of size $|B'|\geq (1 - \epsilon) |B|/K^{1/2}$ with doubling $|B'+B'|\leq 2^{33} \epsilon^{-9} K^4 |B'|$.
\end{theorem}
By setting $\epsilon= 1/8$, we get the following streamlined corollary which is what we use.
\begin{corollary}\label{th:balog}
    Let $K\geq 1$ and let $B \subseteq \F^d_2$ have additive energy $E_4(B)\geq |B|^3/K$. Then there exists a subset $B'\subseteq B$ of size $|B'|\geq 7|B|/(8K^{1/2})$ with doubling $|B'+B'| \leq 2^{60} K^4 |B'|$.
\end{corollary}

Another very recently proved celebrated theorem, known as the Polynomial Freiman-Ruzsa Conjecture or Marton's Conjecture, describes the structure of sets $B$ with small doubling in $\F_2^d$, stating that they must essentially be contained in a small number of translates of a subgroup.
\begin{theorem}[Polynomial Freiman-Ruzsa Theorem~\cite{Gowers25:annals}]\label{th:freiman}
    Let $K\geq 1$ and let $B\subseteq\F_2^d$ be a set with doubling $|B+B|\leq K|B|$. Then there exists a subspace $G\leq \F_2^d$ such that $|G|\leq |B|$ and $B$ is contained in at most $2K^9$ translates of $G$.
\end{theorem}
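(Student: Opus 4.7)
The plan is to follow the entropy-based strategy of Gowers, Green, Manners, and Tao, which entirely bypasses traditional ``Fourier + Bogolyubov + Ruzsa'' routes (which would only give quasi-polynomial bounds). First I would pass from the combinatorial hypothesis on $B$ to an entropic one. Letting $X, X'$ be independent uniform samples from $B$, one shows using the identity $H(X+X') = H(X,X') - H(X \mid X+X')$ together with $|B+B| \leq K|B|$ that the \emph{entropic Ruzsa distance} $d[X;X] \coloneq H(X+X') - H(X)$ satisfies $d[X;X] \leq \log K$. The theorem then reduces to the following \emph{entropic PFR}: if $d[X;X] \leq \log K$ then there is a subgroup $H \leq \F_2^d$ and an element $u$ with $d[X;U_{H+u}] \leq O(\log K)$ and $|H(X) - \log|H|| \leq O(\log K)$, where $U_{H+u}$ is uniform on $H+u$.

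The core step is the proof of entropic PFR by a \emph{$\tau$-minimization argument}. For a small parameter $\eta>0$, I would look at the functional $\tau[Y_1;Y_2] \coloneq d[Y_1;Y_2] + \eta(d[X;Y_1] + d[X;Y_2])$ over all pairs of $\F_2^d$-valued random variables, and fix a pair $(Y_1^\ast,Y_2^\ast)$ achieving the minimum. The plan is then to show that this minimum forces $Y_1^\ast$ (and $Y_2^\ast$) to be uniform on a coset of a subgroup. If not, one constructs a ``fibered'' replacement: given independent copies $(Y_1, Y_2, Y_1', Y_2')$ one studies the conditional distributions of $(Y_1, Y_2)$ given the value of $Y_1 + Y_2 + Y_1' + Y_2'$ (crucially using that the ambient group is $\F_2^d$, so $-1 = 1$), and shows via a careful entropy inequality that some such conditioning strictly decreases $\tau$. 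This yields a contradiction unless the minimizer already has the subgroup structure, from which one can read off the desired $H$.

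With entropic PFR in hand, I would convert back to the combinatorial statement by a Plünnecke--Ruzsa-in-entropy bookkeeping step. From $d[X;U_{H+u}] \leq O(\log K)$ one controls $H(\pi(X))$ where $\pi\colon\F_2^d \to \F_2^d/H$ is the quotient map: it is at most $O(\log K)$. Since the pushforward $\pi_\ast X$ is a distribution on $\F_2^d / H$ concentrated on a set of size $\leq |B|/|H|$, an entropy-to-cardinality argument (Markov's inequality on the pointwise entropy) then shows that $B$ is covered by at most $2^{O(\log K)}$ cosets of $H$; simultaneously one adjusts $H$ (passing to a subgroup of size $\leq |B|$, say by intersecting with a random hyperplane chain) so that the size condition $|H| \leq |B|$ is satisfied.

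The main obstacle, and the deepest part of the argument, is the fibering inequality that drives the $\tau$-minimization: one needs an inequality of the shape $d[Y_1+Y_2'; Y_1'+Y_2] + d[Y_1+Y_1'; Y_2+Y_2'] \leq 2 d[Y_1;Y_2]$ (or a suitable ``$\leq$ with error'' version), and turning this symmetric inequality into a strict decrease of $\tau$ requires carefully tracking several quantities such as $d[X;Y_1]$, $d[X;Y_2]$, and the interaction entropies $I(Y_1+Y_2 : Y_1+Y_2')$. Getting the \emph{polynomial} exponent $9$ in $2K^9$ (rather than merely a polynomial bound) amounts to optimizing the numerical constants in this fibering inequality and in the final entropy-to-covering conversion; this is where the bulk of the work in Gowers--Green--Manners--Tao lies, and where I would expect the accounting to be most delicate.
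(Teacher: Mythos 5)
This theorem is not proved in the paper under review: it is imported verbatim, with citation, from Gowers, Green, Manners and Tao, and is used as a black box in the proof of \Cref{th:polysaving}. There is therefore no ``paper's own proof'' to compare against, and reproving the Polynomial Freiman--Ruzsa Theorem is well outside the scope of the present article.

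That said, your sketch is a faithful high-level account of the GGMT argument: the reduction from the combinatorial doubling hypothesis to the entropic statement $d[X;X]\le\log K$; the $\tau$-minimisation functional $d[Y_1;Y_2]+\eta\left(d[X;Y_1]+d[X;Y_2]\right)$ over pairs of $\F_2^d$-valued random variables; the use of the fibring identity (and the characteristic-$2$ specialisation) to show a non-degenerate minimiser can be improved, forcing $d[Y_1;Y_2]=0$ and hence a coset structure; and the final entropy-to-covering conversion that turns $d[X;U_{H+u}]=O(\log K)$ into a bound on the number of $H$-cosets meeting $B$, together with a size reduction to enforce $|H|\le|B|$. Two small cautions: the specific display you write as the ``fibering inequality'' is not the fibring identity itself but one of the downstream inequalities in the endgame, and getting the precise exponent in $2K^9$ (rather than some polynomial) depends on the exact constants in those inequalities, which your sketch correctly flags as the delicate part but does not carry out. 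As a blind reconstruction of the cited reference your outline is sound; as a candidate proof to insert into this paper it would be redundant, since the authors deliberately treat \Cref{th:freiman} as an external input.
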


We begin with proving~\Cref{th:polysaving} in the special case where $V=\F_2^d$ is a vector space itself, stated below as~\Cref{prop:model}. The argument in this setting relies on the so-called polynomial method, and does not require the two theorems above. The power of the Balog-Szemer\'edi-Gowers and Polynomial Freiman-Ruzsa Theorems will be required to deal with general sets $V$, essentially by reducing the general problem in \Cref{th:polysaving} to this special setting (or more accurately \Cref{cor:model}).
\begin{proposition}\label{prop:model}
    Let $V=\{v_1,v_2,\dots,v_n\}= \F_2^d$ and $N_1,\dots,N_n\subseteq V$ satisfy conditions (i) and (ii) from \Cref{thm:AddComb}.
    Then $\sum_{i=1}^n|N_i| \leq n^{\log_23}$, where we note that $\log_23\approx 1.585$.
\end{proposition}
\begin{remark}\label{rem:lowerbound}
Interestingly, the above bound is optimal as the following example shows. Let $V=\F_2^d$; by abuse of notation we let $S \subseteq [d]$ denote the indicator vector for the set $S \subseteq [d]$. Thus $V = \{ S \mid S \subseteq [d] \}$; also we have implicitly defined addition on sets to be the symmetric difference. We order $V$ in decreasing order of size i.e.~$S$ comes before $T$ if $|S| > |T|$. Next, define $N_{S} \coloneq \{ T \in V \mid T \subseteq S \}$. Note that (i) is satisfied: $S + N_{S} = N_{S}$, since if $T \subseteq S$ then $S + T = (S \setminus T) \subseteq S$.
One can also check that (ii) is satisfied because if $T \in N_{S}$ for $S \neq T$, then $S \supset T$, hence $|S| > |T|$ which implies that $T$ comes after $S$ in our ordering. Finally, $|N_{S}|=2^{|S|}$ (it contains one vector for each subset of $S$) and we calculate $\sum_{S \subseteq [d]} |N_{S}|= \sum_{i = 0}^n \binom{d}{i} 2^i = 3^d = (2^d)^{\log_2 3}$.
\end{remark}
Before giving the proof, we note that \Cref{prop:model} implies a power saving bound whenever $V$ is contained in subspace $H$ whose  size is not much larger than that of $V$ itself.

\begin{corollary}\label{cor:model}
Let $V =\{v_1,v_2,\dots,v_n\}\subseteq \F_2^d$ and $N_1, \ldots, N_n \subseteq V$ satisfy conditions (i) and (ii) from \Cref{thm:AddComb}. If $V\subseteq H$ is contained in a subspace $H$, then $\sum_{i=1}^n|N_i|\leq 2|H|^{\log_23}$.
\end{corollary}
\begin{proof}[Proof of \Cref{cor:model}]
    Define $V'=\{v_1,\dots,v_n,v_{n+1},\dots,v_{|H|}\}$ where $v_{n+1},\dots,v_{|H|}$ is an arbitrary ordering of the elements of $H\setminus V$. Let $N'_i=N_i$ if $i\in[n]$ and $N'_i=\emptyset$ if $i>n$. By \Cref{prop:model}, the claimed bound clearly follows. 
\end{proof}
\begin{proof}[Proof of \Cref{prop:model}]
    Let the sets $V=\F_2^d$ and $N_i$ be given. Since conditions (i) and (ii) (as well as the sizes $|N_i|$) are preserved under translating $N_i$ (i.e.~replacing $N_i$ by $N_i + x$ for some $x \in \F_2^d$), we may assume without loss of generality that $0 \in N_i$ for each $i$. Hence, $\langle N_i+N_i\rangle=\langle N_i\rangle \eqcolon H_i$ for subspaces $H_i \leq \F_2^d$. We may further assume without loss of generality that for each $i$ we have $N_i=H_i$. Indeed, condition (ii) remains unaffected, while $N_i+v_i=N_i$ implies that $\langle N_i\rangle +v_i=\langle N_i\rangle$ (and replacing $N_i$ by $\langle N_i\rangle$ can also only increase the sizes $|N_i|$).

    Thus, it is enough to show that, if $v_1,v_2,\dots,v_n$ is an ordering of $\F_2^d$ and $H_i$ are subspaces such that
    \begin{enumerate}[label={(\roman*)}]
        \item $v_i\in H_i$, and
        \item $v_1,v_2,\dots,v_{i-1}\notin H_i$,
    \end{enumerate}
    then $\sum_i|H_i| \leq 2 n^{\log_23}$. Let us write $h_i= \codim H_i = d-\dim H_i,$
    so that for each $i$ we may find $h_i$ many vectors $u^{(i)}_1,u^{(i)}_2,\dots, u^{(i)}_{h_i} \in \F_2^d$ for which
    \begin{equation}\label{eq:Hdefi}
        H_i=\{x\in\F_2^d \mid x \cdot u^{(i)}_r = 0 \text{ for all $r=1,2,\dots, h_i$}\}.
    \end{equation}
    We emphasise that for $x,y\in\F_2^d$ we write $x\cdot y=\sum_{j=1}^dx_jy_j \in \F_2$. 
    
    Consider the following polynomial in the variable $X=(X_1,X_2,\dots,X_d)$ for each $i\in[n]$:
    \begin{equation}\label{eq:F_jdefi}
        F_i(X)=F_i(X_1,X_2,\dots,X_d) \coloneq \prod_{r=1}^{h_i}\left(X\cdot u^{(i)}_r -1\right), 
    \end{equation}
    which is simply a product of $h_i$ many linear polynomials. Since we will only ever evaluate this polynomial for $X\in\F_2^d$, we may employ a \emph{multilinearisation} trick which replaces each occurrence of a power $X_s^t$ by $X_s$, for $t=1,2,3,\dots$ and $s\in[n]$. One should note that this does not affect evaluations of $F_i(X)$ for $X\in\F_2^d$, since $Y^t=Y$ for $Y\in\F_2$. Multilinearizing each $F_i(X)$, we obtain the polynomials $\Tilde{F}_i$ which are linear combinations of monomials in $d$ variables, having degree $0$ or $1$ in each variable:
    \[
        \Tilde{F}_i(X)\in \langle 1,X_1,\dots,X_d, X_1X_2,\dots,X_1X_2\dots X_d\rangle.
    \]
    By construction, $F_i(X)=\Tilde{F}_i(X)$ for each $X\in\F_2^d$. The crucial property of these polynomials is the following.
    \begin{lemma}\label{lem:polyeval}
        We have that 
        $\displaystyle\tilde{F}_j(v_i)=F_j(v_i) =
        \begin{cases}
            0 &\text{ if $i<j$},\\
            1 &\text{ if $i=j$}.
        \end{cases}$
    \end{lemma}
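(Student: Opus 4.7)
The plan is to unpack the definitions and evaluate $F_j(v_i)$ directly from conditions (i) and (ii). First I would note that the equality $\tilde{F}_j(v_i) = F_j(v_i)$ is essentially free: by construction, $\tilde{F}_j$ is obtained from $F_j$ by replacing each monomial $X_s^t$ by $X_s$, and since we work over $\F_2$ where $Y^t = Y$ for every $Y \in \F_2$ and every $t \geq 1$, both polynomials take the same value on every point of $\F_2^d$. So the only real work is to show $F_j(v_i) = 0$ for $i < j$ and $F_j(v_j) = 1$.

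For the diagonal case $i = j$, condition (i) says $v_j \in H_j$, which by the definition \eqref{eq:Hdefi} of $H_j$ means $v_j \cdot u^{(j)}_r = 0$ in $\F_2$ for every $r \in [h_j]$. Plugging into \eqref{eq:F_jdefi}, each of the $h_j$ factors equals $0 - 1 = -1 = 1$ in $\F_2$, so the product is $1$ regardless of $h_j$.

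For the off-diagonal case $i < j$, condition (ii) says $v_i \notin H_j$. By definition of $H_j$ this means there exists at least one index $r^{\star} \in [h_j]$ for which $v_i \cdot u^{(j)}_{r^{\star}} \neq 0$, i.e.\ equals $1$ in $\F_2$. Then the corresponding factor in the product \eqref{eq:F_jdefi} is $1 - 1 = 0$, which kills the entire product and yields $F_j(v_i) = 0$.

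There is not really an obstacle here; the whole point of the construction \eqref{eq:F_jdefi} was to manufacture a family of polynomials with exactly this ``upper-triangular'' evaluation pattern, and conditions (i)--(ii) translate verbatim into the two cases above. The content of the lemma is setting up this linear-algebraic scaffolding; the subsequent use (showing that the $\tilde{F}_j$ are linearly independent, which bounds $\sum_i \dim H_i$-type quantities via a dimension count of multilinear polynomials) is where the work will actually happen.
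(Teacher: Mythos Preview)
Your proof is correct and follows essentially the same argument as the paper: both verify directly from conditions (i) and (ii) that $v_j \in H_j$ forces every factor of $F_j(v_j)$ to equal $1$, while $v_i \notin H_j$ for $i<j$ makes some factor vanish. The paper's version is slightly terser but the logic is identical.
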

    
    \begin{innerproof}{\Cref{lem:polyeval}}
        Note that if $i<j$, then by condition (ii) we have $v_i\notin H_j$ and hence from \eqref{eq:Hdefi} there exists $r\in\{1,2,\dots, h_j\}$ such that $v_i\cdot u^{(j)}_r=1$ . Clearly \eqref{eq:F_jdefi} then shows that $F_j(v_i)=0$. Now, if $i=j$, then by condition (i) we have $v_i=v_j\in H_i$ which by \eqref{eq:Hdefi} means precisely that $v_i\cdot u^{(i)}_r=0$ for all $r\in\{1,2,\dots, h_i\}$. Hence, $F_j(v_i)=1$.
    \end{innerproof}
    \Cref{lem:polyeval} easily implies the following.
    \begin{lemma}\label{lem:polys}
        The polynomials $\tilde{F}_1,\tilde{F}_2,\dots,\tilde{F}_n$ are linearly independent in the polynomial vector space given by $\langle 1, X_1,\dots,X_k, X_1X_2,\dots, X_1X_2\dots X_d\rangle$.
    \end{lemma}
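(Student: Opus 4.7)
The plan is to deduce linear independence from the triangular evaluation pattern established in \Cref{lem:polyeval}. Consider any linear combination $\sum_{j=1}^n c_j \tilde{F}_j = 0$ in the polynomial vector space, where $c_1, \ldots, c_n \in \F_2$. Since equality as polynomials implies equality as functions on $\F_2^d$, we may evaluate this identity at each $v_i$ and use the triangularity $\tilde{F}_j(v_i) = 0$ for $i < j$, $\tilde{F}_i(v_i) = 1$.

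I would proceed by induction on $i$ to show $c_i = 0$ for every $i \in [n]$. For the base case $i = 1$, evaluating at $v_1$ gives
\[
0 = \sum_{j=1}^n c_j \tilde{F}_j(v_1) = c_1 \cdot 1 + \sum_{j=2}^n c_j \cdot 0 = c_1,
\]
using that $1 < j$ for every $j \geq 2$. For the inductive step, assuming $c_1 = \cdots = c_{i-1} = 0$, we evaluate at $v_i$ to obtain
\[
0 = \sum_{j=i}^n c_j \tilde{F}_j(v_i) = c_i \tilde{F}_i(v_i) + \sum_{j=i+1}^n c_j \tilde{F}_j(v_i) = c_i,
\]
again by \Cref{lem:polyeval}. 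Hence all $c_j$ vanish, proving linear independence.

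There is no real obstacle here: the key combinatorial content (namely the triangular structure of the evaluations $\tilde{F}_j(v_i)$) is already captured by \Cref{lem:polyeval}, and the present lemma is the straightforward linear-algebraic consequence. The only subtlety worth flagging is that linear independence is asserted \emph{in the polynomial vector space}, not merely as functions; but this is immediate because the multilinear polynomials in $\langle 1, X_1, \ldots, X_1X_2\cdots X_d\rangle$ form a set whose evaluation map to functions $\F_2^d \to \F_2$ is injective (indeed, bijective), so a polynomial identity and a functional identity are equivalent in this setting.
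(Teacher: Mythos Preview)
Your proof is correct and follows essentially the same approach as the paper: both exploit the triangular evaluation pattern from \Cref{lem:polyeval} by evaluating a putative dependence relation at the $v_i$. The paper phrases it as a contradiction (evaluate at $v_{j_1}$ where $j_1$ is the smallest index in the support), while you carry out the equivalent forward induction; the content is the same.
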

    \begin{innerproof}{\Cref{lem:polys}}
        Suppose not. Then there is a dependence relation
        \[
            \tilde{F}_{j_1}+\tilde{F}_{j_2}+\dots+\tilde{F}_{j_m}=0
        \]
        for some $1\leq j_1<\dots<j_m\leq n$. But then, evaluating this polynomial at $X=v_{j_1}\in\F_2^d$ would give a contradiction by \Cref{lem:polyeval}: $0= \tilde{F}_{j_1}(v_{j_1})+\tilde{F}_{j_2}(v_{j_1})+\dots+\tilde{F}_{j_m}(v_{j_1})=1+0+\dots+0 =1$.
    \end{innerproof}
    To use this information to bound $\sum_{i=1}^n|N_i|=\sum_i|H_i|$, we find a bound on the sizes of the \emph{level sets} $V_{\alpha} \coloneq \{i\in[n]\mid |H_i|\geq \alpha n=\alpha 2^d\}$ for each $\alpha\in [0,1]$. As each $H_i$ is a subspace, it suffices to bound $|V_\alpha|$ when $\alpha=2^{-b}$ for some $b\in\{0,\ldots,d\}$. Now note that if $i\in V_{2^{-b}}$, then $|H_i|\geq 2^{d-b}$ so that $h_i=\codim H_i\leq b$. Hence, from the definition \eqref{eq:F_jdefi} we see that the collection of polynomials
    \[
        \{\tilde{F}_i(X)\mid i\in V_{2^{-b}}\}\subseteq \left\langle \prod_{i=1}^dX_i^{m_i} \bigmid m_i\in\{0,1\} \text{ and }\sum_{i=1}^dm_i\leq b\right\rangle \eqcolon P_{b}
    \]
    is a set of $|V_{2^{-b}}|$ many linearly independent polynomials which are all contained in the subspace $P_{b}\leq \langle 1,X_1,\dots,X_1X_2,\dots,X_1\dots X_d\rangle$ of polynomials of total degree at most $b$. This implies
    \[
        |V_{2^{-b}}|\leq \dim P_{b} =\sum_{r=0}^{b}\binom{d}{r}.
    \]

Since sets $V_{2^{-b}}$ are nested, to avoid counting their elements more than once, we define $U_{2^{0}}=V_{2^0}$, and $U_{2^{-b}} \coloneq V_{2^{-b}} \setminus U_{2^{-(b-1)}}$ for every $b \in [d]$. 
By definition, the sets $U_{2^{-b}}$ now contains precisely these $i \in [n]$ for which $|H_i|=2^{d-b}$ and therefore $|U_2^{-b}|\leq {d \choose b}$. 
Hence, in total we can bound
\begin{multline*}
\sum_{i=1}^n|N_i|=\sum_{i=1}^n|H_i|=
\sum_{b=0}^d\sum_{i \in U_{2^{-b}}} |H_i|
\leq \sum_{b=0}^d 2^{d-b}|U_{2^{-b}}|
\leq\sum_{b=0}^{d}2^{d-b}\binom{d}{b}=3^d=(2^d)^{\log_23}=n^{\log_23}.
\end{multline*}
which is precisely the desired bound.
\end{proof}

We proceed to the proof of the general case.

\begin{proof}[Proof of \Cref{th:polysaving}]
We will prove the bound $\myC(n) \leq 2^{548} n^{2 - \epsilon}$ for all $n \in \mathbb{N}$. In particular, 
we will proceed by induction on $n$, assuming that the bound $\myC(m)\leq 2^{548} m^{2-\epsilon}$ holds for all $m<n$.

Suppose that the set $V=\{v_1,v_2,\dots,v_n\}\subseteq\F_2^d$ and sets $N_1,\dots,N_n\subseteq V$ satisfy conditions (i), (ii) and are such that $\sum_{i=1}^n|N_i|= \myC(n)$. We may assume that \begin{equation}\label{eq:largeC}\sum_{i=1}^n|N_i|\geq n^{2-\epsilon},\end{equation} as otherwise we are done. The first step in this proof consists in showing, using tools from additive combinatorics, that under the assumption \eqref{eq:largeC}, a large subset of $V$ must be rather densely contained in a subspace of $\F_2^d$. We show that \eqref{eq:largeC} implies that $E_3(V)$, and hence $E_4(V)$, are large.

\begin{lemma}\label{lem:largeenergy}
    Let $V$ and the sets $N_i\subseteq V$ satisfy \eqref{eq:largeC}. Then $E_3(V)\geq \sum_{i=1}^n|N_i|\geq n^{2-\epsilon}$. Moreover, $E_4(V)\geq n^{3-2\epsilon}$
\end{lemma}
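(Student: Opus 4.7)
My plan is to prove the two claimed lower bounds in sequence, via a direct combinatorial construction for $E_3(V)$ followed by Cauchy--Schwarz for $E_4(V)$.

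For the $E_3$ bound, I would exhibit an injection from the set of pairs $\{(i,w):i\in[n],\ w\in N_i\}$ into the set of zero-sum triples in $V^3$. By condition~(i), $v_i+N_i=N_i$, so for every $w\in N_i$ the element $v_i+w$ also lies in $N_i\subseteq V$. Therefore the ordered triple $(v_i,\,w,\,v_i+w)\in V^3$ sums to $0$ in $\F_2^d$. The map $(i,w)\mapsto(v_i,w,v_i+w)$ is injective, since the first coordinate recovers $v_i$ (and hence $i$, as the $v_j$ are distinct) and the second coordinate then recovers $w$. This gives $E_3(V)\geq \sum_i|N_i|$, and combined with the standing assumption~\eqref{eq:largeC} it yields $E_3(V)\geq n^{2-\epsilon}$.

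For the $E_4$ bound, I would apply a standard Cauchy--Schwarz argument linking $E_3$ and $E_4$. Writing $r_V(x)=\#\{(a_1,a_2)\in V^2:a_1+a_2=x\}$ and using that in $\F_2^d$ the equation $a_1+a_2+a_3=0$ is equivalent to $a_3=a_1+a_2$, one has $E_3(V)=\sum_{a\in V}r_V(a)$ and $E_4(V)=\sum_{x\in\F_2^d}r_V(x)^2$. Applying Cauchy--Schwarz to the former,
\[
E_3(V)^2 = \left(\sum_{a\in V}r_V(a)\right)^2 \leq |V|\sum_{a\in V}r_V(a)^2 \leq n\sum_{x\in\F_2^d}r_V(x)^2 = n\cdot E_4(V),
\]
so $E_4(V)\geq E_3(V)^2/n \geq n^{3-2\epsilon}$, as required.

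Both steps are essentially routine, and I do not anticipate any genuine obstacle here. It is worth noting that only condition~(i) on the $N_i$ is actually invoked, to guarantee that the constructed triples lie in $V$; condition~(ii) is presumably reserved for the subsequent stage, where the large additive energy conclusion will be fed into the Balog--Szemer\'edi--Gowers and Polynomial Freiman--Ruzsa Theorems in order to reduce the general problem to the structured subspace setting handled by \Cref{cor:model}.
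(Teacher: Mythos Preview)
Your proposal is correct and follows essentially the same approach as the paper: the $E_3$ bound via the injection $(i,w)\mapsto(v_i,w,v_i+w)$ using condition~(i), and the $E_4$ bound via Cauchy--Schwarz applied to $E_3(V)=\sum_{v\in V}r_V(v)$. Your observation that only condition~(i) is used here is also correct.
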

\begin{innerproof}{\Cref{lem:largeenergy}}
    The bound for $E_3(V)$ is trivial from condition (i), since whenever $j,k\in [n]$ are such that $v_k\in N_j$, then $v_j+v_k\in N_j\subseteq V$ so that $(v_j,v_k,v_j+v_k)$ is a tuple that contributes to $E_3(V)$.

    Note that $\sum_{v\in V}r_V(v)=E_3(V)\geq n^{2-\epsilon}$. Also, we observed above that $\sum_{x}r_V(x)^2=E_4(V)$. By Cauchy-Schwarz, we then get
  \[ E_4(V)=\sum_{x\in\F_2^d}r_V(x)^2\geq\sum_{v\in V}r_V(v)^2
  \geq \frac{1}{|V|}\left(\sum_{v\in V}r_V(v)\right)^2
  =\frac{E_3(V)^2}{n}\geq
  n^{3-2\epsilon}. \qedhere \]
\end{innerproof}

We may now combine \Cref{lem:largeenergy} with \Cref{th:balog} and \Cref{th:freiman}. By \Cref{th:balog} (Balog-Szemer\'edi-Gowers) and as $E_4(V)\geq n^{3-2\epsilon}=n^3/K$ for $K=n^{2\epsilon}$, there exists a subset $A\subseteq V$ of size $|A|\geq 7n^{1-\epsilon}/8$ with $|A-A|=2^{60} n^{8\epsilon}|A|$. Now, by \Cref{th:freiman} (Polynomial-Freiman-Ruzsa), we may find a subspace $H\leq \F_2^d$ such that $A$ is covered by $2^{541} n^{72\epsilon}$ translates of $H$, and where $|H|\leq |A|$. In particular, there is one such translate $x_0+H$ such that
\[
    |V\cap(x_0+H)|\geq \frac{|A|}{2^{541} n^{72\epsilon}}\geq \frac{n^{1-73\epsilon}}{2^{542}}.
\]
Also, without loss of generality we may take $x_0=0$, as otherwise we may replace $H$ by $\langle H \cup \{x_0\}\rangle$, which still satisfies the two properties above up to an additional factor of $2$, namely:
\begin{itemize}
    \item $|H|\leq 2|A|$,
    \item 
    $\displaystyle|V \cap H|\geq \frac{|A|}{2^{541} n^{72\epsilon}}\geq \frac{n^{1-73\epsilon}}{2^{542}}$.
\end{itemize}
Thus we have completed the first step of the proof. To use this information for estimating $\sum_i|N_i|$, we split $N_i=N_i^H\cup N_i^C$ for each $i\in[n]$, where
\[
    N_i^H=N_i\cap (H\cup (H+v_i))=N_i\cap\langle H \cup \{v_i\}\rangle
\]
and $N_i^C=N_i\setminus N_i^H$. It is notationally convenient to also define $N_v=N_i$ if $v=v_i\in V$, and similarly for $N_v^H,N_v^C$. We can calculate
\begin{equation}\label{eq:inductivebound}
\sum_{i=1}^n|N_i|= \sum_{v\in V\cap H} |N_v^H|+\sum_{v\in V\cap H}|N_v^C|+\sum_{v\in V\setminus H}|N_v^H|+\sum_{v\in V\setminus H}|N_v^C|.
\end{equation}
The reason for the definitions of the sets $N_i^H,N_i^C$ will become clear shortly: essentially, the idea is that, because $V\cap H$ is rather dense in the subspace $H$ by the first step, one may expect to obtain good bounds for the first three terms by applying \Cref{cor:model}. The final term may be bounded using the induction hypothesis, since it will be clear from our choice that the sets $N_v^C$ still satisfy conditions (i), (ii). The second step therefore consists in making this approach precise and bounding each of the four terms above.

\begin{enumerate}
    \item First, we immediately deduce from \Cref{cor:model} that
    \[
        \sum_{v\in V\cap H}|N_v^H|\leq 2|H|^{\log_23},
    \]
    since the sets $V'\coloneq V\cap H$ (ordered in the same way as in $V$) and $N'_v \coloneq N_v^H$ for $v\in V\cap H$ still satisfy conditions (i) and (ii). Only that $N_v^H+v=N_v^H$ for $v\in V\cap H$ is perhaps non-trivial, but this is satisfied since $N_{v}+v=N_{v}$ holds for the original sets $N_v$ and, as $v\in H$, we may take the intersection of both sides with $H$.

\item The final term may be bounded by $\sum_{v\in V\setminus H}|N_v^C|\leq \myC(|V\setminus H|)$, since the set $\tilde{V}\coloneq V\setminus H$ with $\tilde{N}_v\coloneq N_v^C\subseteq V\setminus H$ for $v\in \tilde{V}$ is again a system satisfying conditions (i), (ii). Indeed, (ii) is straightforward as $\tilde{N}_v\subseteq N_v$. Moreover, (i) holds: if $v\in V\setminus H$, then, as $v+N_v=N_v$, the set $N_v$ consists of pairs $x,x+v$. Recall that, by definition, $N_v^C=N_v\setminus (N_v\cap \langle H \cup \{v\}\rangle)$, thus we have indeed also only removed elements in pairs (i.e.~$x\in N_v\cap \langle H \cup \{v\}\rangle$ if and only if $x+v \in N_v\cap \langle H \cup \{v\}\rangle$). Therefore, as we showed above that $|V\cap H|\geq 2^{-542} n^{1-73\epsilon}$, and as $\myC(n)$ is clearly increasing in $n$, we can bound
\[
    \sum_{v\in V\setminus H}|N_v^C|\leq \myC(n-2^{-542}n^{1-73\epsilon}).
\]
    
\item To bound the middle sums in \eqref{eq:inductivebound}, we will use the following lemma, whose proof we postpone to the end of the section. 
\begin{lemma}\label{lem:middlesums}
    We have that
    \begin{enumerate}[label={(\alph*)}]
    \item $\displaystyle\sum_{v\in V\setminus H}|N_v^H| \leq 12 n|H|^{\frac{1}{2}\log_23}$, 
    \item $\displaystyle\sum_{v\in V\cap H}|N_v^C|= 12 n|H|^{\frac{1}{2}\log_23}$.
    \end{enumerate}
\end{lemma}

\end{enumerate}
It remains to show how the three bounds above may be combined to complete the proof of \Cref{th:polysaving} (and hence \Cref{thm:AddComb}). Using these bounds in \eqref{eq:inductivebound}, we get
\[
    \sum_{i=1}^n|N_i| \leq 2|H|^{\log_23}+24n|H|^{\frac{1}{2}\log_23}+\myC(n-2^{-542} n^{1-73\epsilon}).
\]
Recall that $|H|\leq 2|A|\leq 2n$ and that we assumed that $\myC(n)=\sum_{i=1}^n|N_i|$. In particular $2|H|^{\log_23} \leq 2 (2n)^{\log_23} \leq 6 n^{1 + \frac{1}{2} \log_23}$and $24 n |H|^{\frac{1}{2} \log_23} \leq 24 n (2n)^{\frac{1}{2} \log_23} \leq 42 n^{1 + \frac{1}{2} \log_23}$, since $24 \cdot 2^{\frac{1}{2} \log_23} = 24\sqrt{3} \leq 42$.
Thus, we conclude that 
\[
    \myC(n)
    \leq \myC(n(1- 2^{-542} n^{-73\epsilon})) + 48 n^{1+\frac{1}{2}\log_23}
    \leq \myC(n(1- 2^{-542} n^{-73\epsilon})) + 2^6 n^{1+\frac{1}{2}\log_23}.
\]
    By the induction hypothesis, we may bound 
\[\myC(n(1-2^{-542} n^{-73\epsilon}))\leq 2^{548} n^{2-\epsilon}(1-2^{-542} n^{-73\epsilon})^{2-\epsilon}
\leq
2^{548} n^{2-\epsilon}(1-2^{-542} n^{-73\epsilon})
\leq 2^{548}n^{2-\epsilon}-2^{6} n^{2-74\epsilon},\] 
since certainly $(1-2^{-542} n^{-73\epsilon})^{2-\epsilon}\leq 1-2^{-542} n^{-73\epsilon}$. We deduce that 
\[
    \myC(n)\leq 2^{548} n^{2 - \epsilon} - 2^6 n^{2 - 74 \epsilon} + 2^6 n^{1 + \frac{1}{2} \log_2 3}.
\]
This implies the desired bound $\myC(n)\leq 2^{548} n^{2- \epsilon}$ so long as $n^{2 - 74 \epsilon} \geq n^{1 + \frac{1}{2} \log_2 3}$.
If we choose $\epsilon>0$ such that $2-74\epsilon \geq 1+\frac{1}{2}\log_23$, then this is clearly true for all $n \in \mathbb{N}$. Hence, taking $\epsilon$ with this property suffices --- 
and any value of $\epsilon$ not exceeding $\frac{1-\frac{1}{2}\log_23}{74} = \frac{1}{74} \log_4 (4/3) \approx 0.002804$ works. This concludes the proof of \Cref{th:polysaving}.
\end{proof}

Our final task is then to prove \Cref{lem:middlesums}.

\begin{proof}[Proof of \Cref{lem:middlesums}]
\begin{enumerate}[wide, label={(\alph*)}]
    \item Recall that we want to bound $\sum_{v\in V\setminus H}|N_v^H|$ where $N_v^H=N_v\cap\langle H \cup \{v\}\rangle$. We begin with an estimate for the contributions of the sets $(V\setminus H)_a \coloneq \{v\in V\setminus H\mid |N_v^H|\geq a\}$ for each $a\in[n]$. 
    If $v\in V\setminus H$, then $N_v^H\subseteq N_v\subseteq V$, and, as $N_v^H=N_v\cap (H\cup (v+H))$ satisfies $N_v^H=v+N_v^H$, by condition (i), it follows that exactly half of the elements of $N_v^H=v+N_v^H$ lie in $H$ and the other half lie in $v+H$. 
    This means that if $v\in (V\setminus H)_a$, then the non-trivial coset $v+H$ intersects $V$ in at least $|N_v^H|/2\geq a/2$ elements. 
    
    Let $y_1+H,y_2+H,\dots,y_\ell+H$ be all the distinct non-trivial cosets of $H$ which each contain at least $a/2$ elements of $V$. We note two things. First, observe that $(V\setminus H)_a\subseteq \bigcup_{j=1}^\ell (y_j+H)$. Indeed, if $v\in(V\setminus H)_a$, then $v\in v+H$, and we proved above that $v+H$ contains at least $a/2$ elements of $V$.
    Second, since distinct cosets are disjoint, it is clear that $\ell \leq 2n/a$ as $|V|\leq n$. 
    We claim that if we fix one such large coset, call it $y+H$, then $\sum_{v\in y+H}|N_v^H| \leq 6|H|^{\log_23}$. By the two observations above this gives the bound $\sum_{v\in (V\setminus H)_a}|N_v^H|\leq\sum_{j=1}^\ell\sum_{v\in y_j+H}|N_v^H| \leq 6|H|^{\log_23}\ell \leq 12|H|^{\log_23}n/a$.

    To see why the claim holds, simply note that we may apply \Cref{cor:model} to the set $V' \coloneq V\cap( H\cup (y+H))=V\cap \langle H\cup \{y\}\rangle$, with the sets $N'_v \coloneq N_v^H\subseteq V'$ for all $v\in V'$. It is easy as always to see that these satisfy conditions (i), (ii), and note also that $|\langle H\cup \{y\}\rangle|=2|H|$, so that \Cref{cor:model} gives $\sum_{v\in y+H}|N_v^H|=2 |\langle H\cup \{y\}\rangle|^{\log_23} = 2 \cdot 2^{\log_23} \cdot |H|^{\log_23} = 6 |H|^{\log_23}$.

    From the definition of $(V\setminus H)_a$ we also know that $\sum_{v\in V\setminus (V\setminus H)_a}|N_v^H|\leq a|V\setminus (V\setminus H)_a|\leq na$. Finally, we can bound in total
    \begin{align*}
        \sum_{v\in V\setminus H}|N_v^H|&\leq\sum_{v\in V\setminus (V\setminus H)_a}|N_v^H|+\sum_{v\in (V\setminus H)_a}|N_v^H|\\
        &\leq na + 12|H|^{\log_23}n/a\\
        &\leq 12 |H|^{\frac{1}{2}\log_23}n, 
    \end{align*}
    if we choose $a=(|H|^{\log_23})^{1/2}$.
    \item Next, we need to find a good way to bound $\sum_{v \in V\cap H}|N_v^C|$, where we recall that $N_v^C=N_v\setminus (H\cup (v+H))$. Note in particular that $N_v^C\subseteq V\setminus H$. We again proceed by considering estimates for the contributions of the level sets 
    \[
        (V\setminus H)^{(a)} \coloneq \{w\in V\setminus H \mid w\text{ appears in at least $a$ many sets $N_v^C$ with $v\in V\cap H$}\}.
    \]
    These are perhaps slightly more complicated than the level sets above; note that these level sets are not subsets of the set $V\cap H$ over which we are summing, but of $V\setminus H$. However, it is clear that for any $a$: \begin{align*}
        \sum_{v\in V\cap H}|N_v^C|&=\sum_{v\in V\cap H}|N_v^C\setminus(V\setminus H)^{(a)}|+\sum_{v\in V\cap H}|N_v^C\cap(V\setminus H)^{(a)}|\\
        &\leq na+\sum_{v\in V\cap H}|N_v^C\cap(V\setminus H)^{(a)}|,
    \end{align*} so it is sufficient to find, for each $a$, good bounds on $\sum_{v\in V\cap H}|N_v^C\cap(V\setminus H)^{(a)}|$. 
    
    Similarly as above, if we fix $a\in[n]$ then we claim that we may find cosets $y_1+H,\dots,y_\ell+H$ such that $(V\setminus H)^{(a)}\subseteq \bigcup_{j=1}^\ell (y_j+H)$ and $\ell \leq 2n/a$. Indeed, it is enough to take all the distinct cosets $y_j+H$ which each contain at least $a/2$ elements of $V$, and note first that there clearly are at most $2n/a$ such cosets as in the proof of the first bound. To see why $(V\setminus H)^{(a)}\subseteq \bigcup_{j=1}^\ell(y_j+H)$, pick a $w\in (V\setminus H)^{(a)}$ and recall that by definition, $w\in N_v^C\subseteq N_v$ for at least $a$ many $v\in V\cap H$. By condition (i), this means that $w+v\in N_v\subseteq V$ for at least $a$ many vectors $v\in H$, so that the coset $w+H$ contains at least $a$ elements of $V$. This completes the proof of the claim.
    
    Again, we apply \Cref{cor:model} for each subspace $\langle H \cup \{y_i\}\rangle$ with the set $V' \coloneq V\cap\langle H \cup 
\{y_i\}\rangle$ (ordered in the same way as in $V$) and $N'_v \coloneq N_v\cap\langle H \cup \{y_i\}\rangle\subseteq V'$ for $v\in V\cap H$ (and to be fully rigorous we may take $N'_v=\emptyset$ for $v\in V\cap(y_i+H)$). 
The condition (ii) is clearly satisfied, and since for every $v \in V \cap H$ we have $v+(N_v \cap (H \cup (H+y_i))) = (N_v \cap (H \cup (H+y_i))$ as $v+N_v =N_v$, (i) also holds.
    We deduce that $\sum_{v\in V\cap H}|N_v^C\cap(y_i+H)|\leq \sum_{v\in V'}|N'_v|\leq2|\langle H\cup \{y_i\}\rangle|^{\log_23} \leq 2 \cdot 2^{\log_23} |H|^{\log_23} = 6 |H|^{\log_23}$, since $N_v^C\cap (y_i+H)\subseteq N'_v$. Hence, summing over all $\ell\leq 2n/a$ cosets, we get \[\sum_{v\in V\cap H}|N_v^C\cap(V\setminus H)^{(a)}|\leq \sum_{j=1}^\ell\sum_{v\in V\cap H}|N_v^C\cap(y_j+H)|= 12 |H|^{\log_23}n/a,\] 
    and, in total,
    \begin{align*}
        \sum_{v\in V\cap H}|N_v^C|&\leq an+\sum_{v\in V\cap H}|N_v^C\cap(V\setminus H)^{(a)}|\\
        &\leq an+12|H|^{\log_23}n/a \leq 12 |H|^{\frac{1}{2}\log_23}n,
    \end{align*}
    if we choose $a=(|H|^{\log_23})^{1/2}$. That concludes the proof of \Cref{lem:middlesums}.\qedhere
\end{enumerate}
\end{proof}

We remark that the polynomial method was also used in the aforementioned recent work of Brakensiek and Guruswami on CSP sparsification~\cite[Section 6]{Brakensiek25:stoc}.
\section{Sparsification algorithm}\label{sec:algorithm}

In this section, we prove \Cref{thm:SparsAlg}, i.e.~present an efficient algorithm for strong sparsification of monotone 1-in-3-SAT.
For convenience, we consider the problem of monotone \emph{2-in-3}-SAT which is obtained by swapping the roles of 0 and 1 in the definition of 1-in-3-SAT --- it is clear that for our purposes these two problems are equivalent. 
Exploiting the ideas of \cite{HMNZ24_logarithmic, NZ24_LO}, we will make use of the \emph{linear structure} of 2-in-3-SAT: a clause $(x, y, z)$ of a monotone 2-in-3-SAT instance is satisfied if and only if $x + y + z = 2$, $x, y, z \in \{0, 1\}$.

\begin{definition}
    Consider an instance $\X = (X,C)$ of monotone 2-in-3-SAT. We define a system of modulo~2 linear equations $A_\X$ as follows. The set of variables of $A_\X$ is $X$, and for every clause $(x, y, z) \in C$, $A_\X$ contains the linear equation $x + y + z \equiv 0 \bmod 2$.
\end{definition}

Clearly $A_\X$ is a relaxation of $\X$ --- every solution to $\X$ is also a solution to $A_\X$; in particular, if two variables are always equal in every solution to $A_\X$, then they are always equal in every solution to $\X$. We say that two distinct variables $x$ and $y$ are \emph{twins}
if $\hat{x} = \hat{y}$ for every solution $(\hat{v})_{v \in X}$ to $A_\X$ --- and we say that $\X$ is \emph{twin-free} if no such pair of variables exists.
Note that it is easy to check in polynomial time whether $x$ and $y$ are twins ---
simply solve $A_\X$ with $(\hat{x}, \hat{y})$ set to $(0, 1)$ and $(1, 0)$.

Fix a twin-free instance $\X = (X, C)$ of monotone 2-in-3-SAT and consider the vector space $\F_2[X]$, i.e.~the space of formal linear combinations of elements in $X$ with coefficients in $\F_2$. Each equation $x + y + z \equiv 0 \bmod 2$ in $A_\X$ can be associated with an element $x + y + z$ of $\F_2[X]$. We let $\langle C \rangle$ denote the subspace generated by all of these equations.

\begin{lemma}
For every solution $(\hat{v})_{v \in X}$ of $A_\X$, and for any $x_1 + \cdots + x_k \in \langle C \rangle$, we have $\hat{x}_1 + \cdots + \hat{x}_k \equiv 0 \bmod 2$.
\end{lemma}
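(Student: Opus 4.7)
The plan is to view the evaluation $v \mapsto \hat v$ as a linear map $\phi \colon \F_2[X] \to \F_2$, defined on the basis $X$ by $\phi(v) = \hat v$ and extended by $\F_2$-linearity. Reformulated this way, the claim is simply that $\phi$ vanishes on the subspace $\langle C \rangle$.

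First I would verify that $\phi$ vanishes on each of the generators of $\langle C \rangle$. By definition, these generators are the elements $x+y+z \in \F_2[X]$ associated to the clauses $(x,y,z) \in C$. For such a generator, $\phi(x+y+z) = \hat x + \hat y + \hat z$, and this is $0$ in $\F_2$ precisely because $(\hat v)_{v \in X}$ is a solution of $A_\X$ and hence satisfies the equation $x + y + z \equiv 0 \bmod 2$ of $A_\X$ corresponding to that clause.

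Since $\phi$ is $\F_2$-linear and vanishes on a spanning set of $\langle C \rangle$, it vanishes on the whole of $\langle C \rangle$. Applying this to the particular element $x_1 + \cdots + x_k \in \langle C \rangle$ yields $\hat x_1 + \cdots + \hat x_k = \phi(x_1 + \cdots + x_k) = 0$ in $\F_2$, i.e.\ $\hat x_1 + \cdots + \hat x_k \equiv 0 \bmod 2$, as required.

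There is no real obstacle in this argument; the only book-keeping point worth flagging is that the formal sum $x_1 + \cdots + x_k$ must be understood inside $\F_2[X]$, so that any variable arising with even multiplicity from a combination of clause-vectors cancels. This cancellation is automatically respected by $\phi$ because $\phi$ is linear, which is exactly the reason for phrasing the argument through $\phi$ rather than by manually adding the clause equations.
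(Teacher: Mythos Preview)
Your argument is correct and is essentially the same as the paper's: both observe that evaluating a solution on each clause-vector gives $0$ and then use linearity to extend this to all of $\langle C\rangle$. Your version simply makes the linearity step explicit by naming the evaluation map $\phi$, which is a clean way to handle the cancellation book-keeping.
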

\begin{proof}
The term $x_1 + \cdots + x_k$ can be formed by summing together multiple equations from $A_\X$. Hence it must equal 0 in any solution to $A_\X$. 
\end{proof}

Thus, whenever $\X = (X, C)$ is twin-free, for any distinct $x, y \in X$ we have $x + y \not \in \langle C \rangle$, and hence $x$ and $y$ are different \emph{as elements of $\F_2[X] / \langle C \rangle$}. Observing that $\F_2[X] / \langle C \rangle$ is just some finite-dimensional vector space of the form $\F_2^d$, we have the following. (We note that a similar but more general construction appears in~\cite[Proposition 6.6]{Brakensiek25:stoc}.)

\begin{lemma}\label{lem:alpha}
    Whenever $\X= (X, C)$ is twin-free, we can compute in polynomial time an integer $d$ and an \emph{injective} map $\alpha : X \to \F_2^d$ so that the following holds.
    For any $x_1, \ldots, x_k \in X$ with $\alpha(x_1) + \cdots + \alpha(x_k) = 0$, we have $\hat{x}_1 + \cdots + \hat{x}_k \equiv 0 \bmod 2$ in any solution $(\hat{v})_{v \in X}$ to $A_\X$. Furthermore, for every equation $x + y + z \equiv 0 \bmod 2$ in $A_\X$, we have $\alpha(x) + \alpha(y) + \alpha(z) = 0$.
\end{lemma}
\begin{proof}
    Note that
        $X \to \F_2[X] \to \F_2[X] / \langle C \rangle \cong \F_2^d$.
    We output this composite as $\alpha$. For every input in $X$ it is straightforward to see what it should be mapped to in $\F_2[X]$ and further in $\F_2[X] / \langle C \rangle$. Moreover, from the previous discussion it follows that $\alpha$ satisfies the required conditions. Finally, the image in~$\F_2^d$ can be computed simply by finding a basis for the quotient space, which can be done in polynomial time.
\end{proof}

Fix a twin-free instance $\X = (X, C)$, and let $\alpha$ be given by \Cref{lem:alpha}. Consider two variables $x, y \in X$.
If there exists an even number of neighbours\footnote{We say two variables are neighbours if they belong to the same clause.} of $x$, say $z_1, \ldots, z_{2k}$, so that $\alpha(z_1) + \cdots + \alpha(z_{2k}) = \alpha(y)$, then write $x \succeq y$. This suggestive notation has the following justification.

\begin{lemma}\label{lem:cycle}
    Suppose $\X = (X, C)$ is twin-free and $x \succeq y$. Then, in any solution $(\hat{v})_{v \in X}$ to $\X$, we have $\hat{x} \geq \hat{y}$.
\end{lemma}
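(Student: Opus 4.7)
The plan is to reduce the inequality $\hat x \geq \hat y$ to a single case. Since $\hat x, \hat y \in \{0,1\}$, the only way $\hat x \geq \hat y$ can fail is if $\hat x = 0$ and $\hat y = 1$. So it suffices to show that in any solution to $\X$, if $\hat x = 0$ then $\hat y = 0$.

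The first step is to unpack $x \succeq y$: there exist neighbours $z_1, \ldots, z_{2k}$ of $x$ with $\alpha(z_1) + \cdots + \alpha(z_{2k}) = \alpha(y)$, equivalently $\alpha(z_1) + \cdots + \alpha(z_{2k}) + \alpha(y) = 0$ in $\F_2^d$. Since any solution to $\X$ is also a solution to $A_\X$, \Cref{lem:alpha} applied to this $\F_2$-relation yields
\[
   \hat z_1 + \cdots + \hat z_{2k} + \hat y \equiv 0 \pmod 2.
\]
The second step is to use the 2-in-3 semantics of $\X$ to evaluate the $\hat z_i$'s when $\hat x = 0$. Since each $z_i$ is a neighbour of $x$, there is a clause containing both $x$ and $z_i$, say $(x, z_i, w_i) \in C$. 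The 2-in-3 condition forces $\hat x + \hat z_i + \hat w_i = 2$, so combined with $\hat x = 0$ we obtain $\hat z_i = \hat w_i = 1$, and in particular $\hat z_i = 1$ for each $i$. Substituting into the mod-2 identity above gives $2k + \hat y \equiv 0 \pmod 2$, hence $\hat y \equiv 0 \pmod 2$, i.e.\ $\hat y = 0$, as required.

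There is no real obstacle here; the argument is a direct assembly of the definitions. The only point worth flagging is that the \emph{evenness} of the number of neighbours in the definition of $\succeq$ is essential: with an odd sum, the same computation would yield $\hat y \equiv 1 \pmod 2$, which is precisely why evenness is baked into the definition and why $\succeq$ captures one-sided implications rather than equalities between $\hat x$ and $\hat y$.
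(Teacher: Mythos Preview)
Your proof is correct and follows essentially the same approach as the paper: assume $\hat x = 0$, use the 2-in-3 semantics to force $\hat z_i = 1$ for every neighbour $z_i$ of $x$, and then invoke \Cref{lem:alpha} on the relation $\alpha(z_1)+\cdots+\alpha(z_{2k})+\alpha(y)=0$ to conclude $\hat y \equiv 2k \equiv 0 \pmod 2$. The only difference is cosmetic: you name the third clause element $w_i$ explicitly and add a remark on why evenness is needed.
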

\begin{proof}
    If $\hat{x} = 1$ then the claim holds, so suppose $\hat{x} = 0$. Since $(\hat{v})_{v \in X}$ is a solution to $\X$ (which, recall, is a 2-in-3-SAT instance), it follows that for all neighbours $z$ of $x$ we have $\hat{z} = 1$. Now, by assumption we have that $\alpha(z_1) + \cdots + \alpha(z_{2k}) = \alpha(y)$ i.e.~$\alpha(z_1) + \cdots + \alpha(z_{2k}) +\alpha(y) = 0$. Hence 
    \[
    \hat{y} \equiv \hat{y} + 2k \equiv \hat{y} + \hat{z}_1 + \cdots + \hat{z}_{2k} \equiv 0 \mod 2.
    \]
    Thus $\hat{y} = 0$ and $\hat{x} \geq \hat{y}$ as required.
\end{proof}

We can check whether $x \succeq y$ in polynomial time, as we will now describe. Let $z_1, \ldots, z_t$ be the neighbours of $x$. For $j \in [d]$ let $\alpha^j$ be defined so that $\alpha(x) = (\alpha^1(x), \ldots, \alpha^d(x))$.  We check whether there exist $b_1, \ldots, b_t \in \F_2$ so that $\sum_{i = 1}^t b_i = 0$ and, for all $j \in [d]$, we have $\sum_{i = 1}^t b_i \alpha^j(z_i) = \alpha^j(y)$.
If they exist, let $Z$ be the set of these $z_i$'s for which $b_i=1$. Clearly, the first sum guarantees that $Z$ has an even number of elements, while the others give that $\sum_{z \in Z}\alpha(z) = \alpha(y)$.
Therefore, $x \succeq y$ if and only if $b_1,\ldots,b_t$ exist, and we can decide that by solving a system of linear equations.

Call a sequence of variables $x_1, \ldots, x_k$ a \emph{cycle} if $x_1 \succeq \cdots \succeq x_k \succeq x_1$. If $\X$ does not admit any cycles, call it \emph{cycle-free}. With these definitions in place, we can finally apply \Cref{thm:AddComb} in the following theorem.

\begin{theorem}\label{thm:glue}
    Suppose $\X = (X, C)$ is an $n$-variable, $m$-clause instance of monotone 2-in-3-SAT that is twin-free and cycle-free. Then $m = O(n^{2 - \epsilon})$, for the same $\epsilon$ as in \Cref{thm:AddComb}.
\end{theorem}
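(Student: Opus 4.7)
The strategy is to encode the instance $\X$ into the framework of \Cref{thm:AddComb} using the embedding $\alpha$ supplied by \Cref{lem:alpha}. Since $\X$ is twin-free, $\alpha$ is injective, so identifying each variable $x \in X$ with $\alpha(x) \in \F_2^d$ produces an $n$-element set $V = \{v_1, \ldots, v_n\} \subseteq \F_2^d$. For each $v = \alpha(x) \in V$ I define $N_v \subseteq V$ to be the image under $\alpha$ of the variables that co-occur with $x$ in some clause. Each clause $(x, y, z)$ through $x$ contributes $\alpha(y)$ and $\alpha(z)$ to $N_v$, and since $x + y + z$ vanishes in $\F_2[X]/\langle C \rangle$ we have $v + \alpha(y) = \alpha(z)$. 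Hence $v + N_v = N_v$, establishing condition~(i) of \Cref{thm:AddComb}.

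The crux of verifying condition~(ii) is the equivalence $v_i \succeq v_j \iff v_j \in \langle N_i + N_i \rangle$. Indeed, a vector lies in $\langle N_i + N_i \rangle$ precisely when it is an $\F_2$-linear combination of an even number of elements of $N_i$, which matches the definition of $\succeq$. Cycle-freeness of $\X$ ensures that $\succeq$, restricted to distinct pairs, forms a DAG, so I can topologically sort $V$ as $v_1, \ldots, v_n$ so that $v_i \succeq v_j$ with $i \neq j$ forces $i < j$. For $j < i$ we then have $v_i \not\succeq v_j$, i.e.\ $v_j \notin \langle N_i + N_i \rangle$, which is condition~(ii). Applying \Cref{thm:AddComb} yields $\sum_{i=1}^n |N_i| = O(n^{2 - \epsilon})$.

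It remains to convert this into a clause bound. Here twin-freeness plays a second role: if distinct clauses $(x, y, z)$ and $(x, y, z')$ shared the pair $\{x, y\}$, then the equations of $A_\X$ would force $z \equiv z' \pmod 2$ in every solution, making $z$ and $z'$ twins, a contradiction. Hence the clauses through $x$ partition the set of neighbours of $x$ into $\deg(x)$ disjoint unordered pairs, so by injectivity of $\alpha$ we obtain $|N_v| = 2\deg(\alpha^{-1}(v))$. Summing, $\sum_{i=1}^n |N_i| = 2 \sum_{x \in X} \deg(x) = 6m$, yielding $m = O(n^{2 - \epsilon})$ as required. The main technical subtlety is the equivalence $v_i \succeq v_j \iff v_j \in \langle N_i + N_i \rangle$; once this is secured, the topological sort giving condition~(ii) and the twin-freeness accounting on the clause side both follow smoothly.
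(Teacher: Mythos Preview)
Your proposal is correct and follows essentially the same approach as the paper: both use \Cref{lem:alpha} to embed the variables into $\F_2^d$, set $N_i$ to be the image of the neighbours of $x_i$, verify condition~(i) via the clause equation $\alpha(x)+\alpha(y)+\alpha(z)=0$, obtain condition~(ii) by topologically sorting along $\succeq$ after identifying $\langle N_i+N_i\rangle$ with the set of even sums of neighbours, and then invoke \Cref{thm:AddComb}. The only cosmetic difference is the final accounting: the paper bounds the number of clauses by the number of neighbouring pairs, whereas you compute $\sum_i|N_i|=6m$ directly via the partition of each vertex's neighbourhood into disjoint pairs---both arguments rest on the same twin-freeness observation and yield the same bound.
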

\begin{proof}
    Let $\alpha, d$ be given by \Cref{lem:alpha}.
    Consider the relation $\succeq$. As $\succeq$ is assumed to be acyclic, there exists a topological sort of $V$ with respect to $\succeq$. In other words, we order $X = \{ x_1, \ldots, x_n \}$ in such a way that for $j \leq i$ we have $x_i \not \succeq x_j$.
    With this in hand, we apply \Cref{thm:AddComb} to $V = \{ \alpha(x_1), \ldots, \alpha(x_n) \}$ and $N_i = \{ \alpha(y) \mid \text{$y$ is a neighbour of $x_i$} \}$. Let us first check that the properties of \Cref{thm:AddComb} are satisfied.
    \begin{enumerate}[label={(\roman*)}]
        \item Consider any element $\alpha(y) \in N_i$. There exists $z$ so that $(x_i, y, z)$ is a clause of $\X$. The properties of $\alpha$ guarantee that $\alpha(x_i) + \alpha(y) + \alpha(z) = 0$, hence $\alpha(y) + \alpha(x_i) = \alpha(z) \in N_i$, as $z$ is also a neighbour of $x_i$. 
        \item Consider any $j \leq i$. We have $x_i \not \succeq x_j$ by our choice of ordering of $x_1, \ldots, x_n$. Hence, there is no collection of an even number of neighbours $z_1, \ldots, z_{2k}$ of $x_i$ so that $\alpha(x_j) = \sum_{\ell= 1}^{2k} \alpha(z_\ell)$. The set of possible sums on the right ranges over $\langle N_i + N_i \rangle$, thus we obtain that $\alpha(x_j) \not \in \langle N_i + N_i \rangle$.
    \end{enumerate}
    
    Hence, we can apply \Cref{thm:AddComb}, and thus we conclude that $\sum_{i = 1}^n \# \{ \alpha(y) \mid \text{$y$ is a neighbour of $x_i$} \} = O(n^{2-\epsilon})$ for $\epsilon \approx 0.0028$. Since $\alpha$ is injective, this is the same as saying that the total number of pairs of variables $(x, y)$ that are neighbours is at most $O(n^{2 - \epsilon})$.

    Observe that, for any four distinct variables $x, y, z, t \in X$, it is impossible that there is a clause on variables $x, y$, and $z$ and another clause on variables $x, y$, and $t$, as then $z$ and $t$ would be twins. In other words, each pair of variables that are in a clause together are in \emph{exactly} one clause together. Thus, the number of clauses is at most the number of such pairs --- whence the conclusion.
\end{proof}

We are now ready to prove \Cref{thm:SparsAlg}.

\begin{proof}[Proof of \Cref{thm:SparsAlg}]
    Suppose we are given an instance $\X$ of monotone 2-in-3-SAT. 
    We will construct $\sim$ by repeatedly merging pairs of variables that are the same in all solutions to $\X$. The relation $\sim$ will then be the transitive, reflexive closure of all the merges. 
    
    Let us now describe what variables we merge. While there are any twins $x$ and $y$, merge them. If there are no twins, but there exists a cycle $x_1\succeq \ldots\succeq x_k\succeq x_1$, then merge all the variables $x_1, \ldots, x_k$. Since variables in a twin-pair have the same value in all solutions to $A_\X$, they must have the same value in all solutions to $\X$ --- and the latter is true for all variables in a cycle as well, due to \Cref{lem:cycle}. Detecting twins can be done in polynomial time; furthermore, computing $\preceq$ and then finding cycles in it can also be done in polynomial time. Suppose we started with an $n$-variable instance. We get, at the end, a twin-free, cycle-free instance on at most $n$ variables. By \Cref{thm:glue} this instance has $O(n^{2 - \epsilon})$ clauses, as desired.
\end{proof}

\section{Monotone strong sparsification implies non-monotone}\label{sec:reduction}

\begin{theorem}
    Suppose that there is a polynomial-time strong sparsification algorithm $\mathscr{A}$ for monotone 1-in-3-SAT with performance $f(n)$. Then there is a polynomial-time strong sparsification algorithm for \emph{non-monotone} 1-in-3-SAT with performance at most $8f(2n)$.
\end{theorem}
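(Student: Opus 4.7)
The plan is to reduce non-monotone 1-in-3-SAT to the monotone case via the standard variable-doubling construction. Given a non-monotone instance $\X = (X, C)$ with $|X| = n$, I would form a monotone instance $\Y = (Y, D)$ on $Y = X^+ \cup X^-$, where $X^\pm = \{x^\pm : x \in X\}$, so $|Y| = 2n$. For each non-monotone clause $(\ell_1, \ell_2, \ell_3) \in C$, we would add the monotone clause $(\ell_1^*, \ell_2^*, \ell_3^*) \in D$, with $x^* = x^+$ and $(\neg x)^* = x^-$. The key soundness observation is that every solution $\hat{\cdot}$ of $\X$ lifts to a solution of $\Y$ by setting $\tilde{x}^+ = \hat{x}$ and $\tilde{x}^- = 1 - \hat{x}$. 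Consequently, under the identification $x^+ \leftrightarrow x$, $x^- \leftrightarrow \neg x$, any equivalence on $Y$ that is sound for $\Y$ pulls back to a sound relation on the literals of $\X$.

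Next, I would apply $\mathscr{A}$ to $\Y$ to obtain an equivalence $\sim$ on $Y$ with at most $f(2n)$ clauses in the quotient $\Y / \sim$. Using the above identification, I would pull $\sim$ back to an equivalence $\sim_\X$ on the literals $X \cup \neg X$ of $\X$. For this output to serve as a non-monotone strong sparsification, the equivalence should also respect the negation involution $\ell \leftrightarrow \neg \ell$, so I would coarsen $\sim_\X$ to $\sim^*_\X$, defined as the transitive closure of $\sim_\X \cup \{(\neg \ell, \neg \ell') : \ell \sim_\X \ell'\}$. By the lifting property, each step in a chain in this closure preserves equality in every solution of $\X$, so $\sim^*_\X$ remains sound; and by construction it satisfies $\ell \sim^*_\X \ell' \Leftrightarrow \neg \ell \sim^*_\X \neg \ell'$. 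If $\sim^*_\X$ ever identifies some literal with its negation, then $\X$ must be unsatisfiable and we can output a trivial sparsification instead.

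Finally, I would bound the number of clauses of $\X / \sim^*_\X$. The map $\ell \mapsto \ell^*$ is a bijection between clauses of $\X$ and clauses of $\Y$, and since $\sim_\X \subseteq \sim^*_\X$ (under the identification with $\sim$), this bijection descends to a well-defined surjection from clauses of $\Y / \sim$ onto clauses of $\X / \sim^*_\X$: whenever two clauses of $\Y$ agree in $\Y/\sim$, the corresponding clauses of $\X$ agree in $\X/\sim^*_\X$. Therefore $\X / \sim^*_\X$ has at most $f(2n)$ clauses, comfortably within the claimed $8 f(2n)$ bound. The additional factor of $8$ in the statement accommodates a cruder counting in which each monotone clause of $\Y / \sim$ is tracked together with the $2^3$ possible sign patterns of its three literal positions when reinterpreted in the non-monotone quotient.

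The principal difficulty lies in reconciling the soundness requirement with the negation-symmetry requirement: the raw pullback $\sim_\X$ is sound but need not be negation-symmetric, and enforcing the symmetry by coarsening could in principle produce spurious identifications between a literal and its negation. The lifting property of the doubling construction is precisely what ensures this does not happen on satisfiable instances, and makes the clause-count inequality pass through the coarsening step.
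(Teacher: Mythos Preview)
Your approach is essentially the paper's: both double the variable set to form a monotone instance $\Y$ on $2n$ variables, apply $\mathscr{A}$ to obtain an equivalence $\sim$ on $Y$, and then symmetrise $\sim$ under the negation involution $x_i \leftrightarrow y_i$ to produce a sound equivalence for the original instance. The paper phrases the symmetrisation via an auxiliary bipartite graph $G$ on $Y$ (with edges between $v \sim x_i$ and $w \sim y_i$), taking ``same component, same side of the bipartition'' as the output relation; your transitive closure of $\sim$ under $\ell \mapsto \neg\ell$ yields exactly the same relation on $Y$, just described differently.

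The one substantive difference is the output convention. The paper keeps the monotone definition verbatim and outputs an equivalence on the variable set $X$ only, discarding forced identifications of the form $x_i \equiv \neg x_j$. That is precisely why the factor $8$ appears there: a single clause class in $\Y/\sim$ can split into up to $2^3$ distinct clause classes in $\X/\sim_G$, one per sign pattern, and the paper's counting argument tracks exactly this. Your literal-level output $\sim^*_\X$ retains the cross-sign identifications, so your surjection argument legitimately yields the sharper bound $f(2n)$; your closing remark correctly locates where the $8$ would reappear under the paper's stricter convention. Both arguments are correct; yours is tighter if literal-level merging is permitted, and degrades to the paper's bound otherwise.
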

\begin{proof}
Suppose we are given an instance $\X$ of non-monotone 1-in-3-SAT with variables $X = \{ x_1, \ldots, x_n \}$ and clauses $C$. Add variables $y_1, \ldots, y_n$, and create an instance $\Y$ with variable set $Y = \{ x_1, y_1, \ldots, x_n, y_n \}$ and whose set $C'$ of clauses is the same as in $\X$, but with the literal $\lnot x_i$ replaced by the variable $y_i$ in every clause.
    By construction $\Y$ is an instance of monotone 1-in-3-SAT with $2n$ variables; also every solution to $\X$ extends to a solution to $\Y$, by setting $y_i = \lnot x_i$.
    
    Thus, we can apply $\mathscr{A}$ to $\Y$ to create an equivalence relation $\sim$ on $Y$ such that, in any solution to $\Y$, if $v \sim w$ then $v$ and $w$ are assigned the same value. 
    Furthermore, by our assumption on $\mathscr{A}$, we have that the size of $C'/ \mathord{\sim}$ is  $f(2n)$. 
    By construction, in every solution to $\X$ extended to $\Y$, different values are assigned to $x_i$ and $y_i$. 
    Define a graph $G=(Y,E)$, where $(v,w)\in E$ if there exists $i\in[n]$ such that $v \sim x_i$ and $w \sim y_i$.
    Note that if $(v,w) \in E$ then, in any solution to $\X$ extended to $\Y$, different values are assigned to $v$ and $w$. 

    If the graph $G$ is non-bipartite then $\X$ is unsatisfiable, and our strong sparsification algorithm can vacuously return any equivalence relation; hence, suppose $G$ is bipartite, and denote by $A$ and $B$ its bipartition classes (if $G$ is not connected, fix one choice for $A$ and $B$).
    We write $x_i \sim_G x_j$ if $x_i$ and $x_j$ belong to the same connected component of $G$, and are both in $A$ or both in $B$. 
    If $x_i \sim_G x_j$, then there exists a path $x_i - z_1 - \ldots - z_{2k-1} - x_j$ in $G$, which implies that $x_i$ and $x_j$ are assigned the same value in each solution to $\X$.
    Since the above procedure can return $\sim_G$ in polynomial time, it is a strong sparsification algorithm for 1-in-3-SAT.  
    It only remains to prove that this algorithm has the advertised performance; i.e.~that $\X / \mathord{\sim_G}$ does not have too many clauses.
    
    We claim that each element of $C' / \mathord{\sim}$ corresponds to at most 8 clauses from $C / \mathord{\sim_G}$. 
    We will prove this by constructing a mapping that associates, with each clause $c \in C / \mathord{\sim_G}$, a non-empty subset $A(c) \subseteq C' / \mathord{\sim}$, in such a way that every clause from $C' / \mathord{\sim}$ belongs to at most 8 different sets $A(c)$. 
    We illustrate the construction by example. Consider a clause $(P, \lnot Q, R) \in C / \mathord{\sim_G}$ (and recall that $P, Q, R$~are equivalence classes of $\sim_G$). Define
    \[
    A(P, \lnot Q, R)
    =
    \{ ( [x_i]_\sim, [y_j]_\sim, [x_k]_\sim) \mid x_i \in P, y_j \in Q, x_k \in R \}.
    \]
    
    Consider some clause $(S, T, U)$ of $\Y / \mathord{\sim}$ (recall again that $S, T, U$ are equivalence classes of $\sim$). For each of the 8 sign patterns in $\{ +, -\}^3$ there is at most one clause $c = \X / \mathord{\sim_G}$ for which $(S, T, U) \in A(c)$ which follows that sign pattern.
     For example, for the sign pattern $(+, -, +)$, consider any $x_i \in S, y_j \in T, x_k \in U$, and note that there is at most one clause of form $c = ([x_i]_{\sim_G}, \lnot [x_j]_{\sim_G}, [x_k]_{\sim_G})$ in $C / \mathord{\sim_G}$. This is because, for any other 
    $x_{i'} \in P, y_{j'} \in Q, x_{k'} \in R$, we have $x_i \sim x_{i'}, y_j \sim y_{j'}, x_k \sim x_{k'}$ and hence $x_i \sim_G x_{i'}, x_j \sim_G x_{j'}, x_k \sim_G x_{k'}$. Noting that only such a clause can have $(S, T, U) \in A(c)$ and follow the sign pattern $(+, -, +)$ completes the proof.
\end{proof}

\section{Generalisations and lower bounds}\label{sec:general}
As already pointed out in the introduction, the notion of strong sparsification
can be defined for different types of satisfiability problems, and in particular
all constraint satisfaction problems (CSPs).

Let $\Gamma$ be a set of relations on a fixed set $D$. An instance of
$\CSP(\Gamma)$ consists of a set X of variables and a set C of constraints: a
constraint is a tuple $(x_1,\ldots,x_r,R)$, where $x_1,\ldots,x_r \in X$ and
$R\in \Gamma$ is an $\ar(R)$-ary relation, i.e., $R\subseteq D^{\ar(R)}$.
The task is to decide whether there exists an assignment $c:X \to D$ such that
for every $(x_1,\ldots,x_r,R)\in C$ we have that $(c(x_1),\ldots,c(x_r)) \in R$.

\begin{definition}
    A \emph{strong sparsification algorithm for $\CSP(\Gamma)$} is an algorithm
    which, given an instance $\X=(X,C)$ of $\CSP(\Gamma)$, outputs an equivalence relation ${\sim}$ on $X$ such
    that if $x \sim y$ then $x$ and $y$ have the same value in all solutions to $\X$.
    The performance of the algorithm is given by the number of constraints in $C / {\sim}$ as a function of $n = |X|$.
\end{definition}
We call the equivalence relation $\sim$ outputted by a strong sparsification algorithm a \emph{strong sparsifier}.

In this section, we study the \emph{mere existence} of strong sparsifiers for classic computational problems that can be phrased as CSPs. In particular, we will give combinatorial lower bounds for strong sparsification for monotone 1-in-3-SAT: there exist instances with $n$ vertices and $\Omega(n^{1.725\cdots})$ clauses where every strong sparsifier is trivial, i.e.~merges no pair of vertices.

Note that for a given instance of CSP, an optimal strong sparsifier can be always found in exponential time by simply listing all possible solutions and merging variables that are the same in all solutions. Thus, the corollaries of this section can be read equivalently as: no exponential-time strong sparsification algorithm with given performance exists.

We start with the following definition.

\begin{definition}\label{def:stableInstance}
   Let $\Gamma$ be a set of relations on a fixed set $D$. An instance $\X = (X, C)$ of $\CSP(\Gamma)$ is \emph{stable} if for every two distinct variables $x,y \in X$ there exists a solution $c$ to $\X$ with $c(x) \neq c(y)$. 
The instance~$\X$ is \emph{maximal} if no constraint can be added to $\X$ without eliminating at least one possible solution, i.e. if $(x_1,\ldots,x_r,R) \not \in C$, for some $x_1,\ldots,x_r \in X$, for some $R \in \Gamma$ of arity $r$, then there exists a solution $c$ to $\X$ such that $(c(x_1),\ldots,c(x_r)) \notin R$.
\end{definition}

We observe that the performance of a strong sparsification algorithm for $\CSP(\Gamma)$ is precisely given by the properties of the stable, maximal instances: any family of stable instances of $\CSP(\Gamma)$ gives a lower bound for the performance, and the elements of the family can be assumed to be maximal a fortiori.

One can construct stable maximal instances using the following correspondence.
We note that the instances described in \Cref{thm:genericEquivalence} below coincide with subinstances of the \emph{long-code construction}, often used in the CSP literature~\cite{BGS:98}.

\begin{definition}
Let $\Gamma = \{R_1, \ldots, R_k\}$ be a set of relations on domain $D$, where $R_i$ is a relation of arity $\ar(R_i)$. 
For a family $V \subseteq D^N$ of vectors and $i \in [k]$ we denote by $\M_i(V) \subseteq D^{\ar(R_i \times N}$ the
maximal family of matrices such that for every $M \in \M_i$ every row of $M$ is a vector from $V$ and each column of $M$ is contained in $R_i$.
\end{definition}

The connection between stable maximal instances and families $\M_i(V)$ of matrices is as follows.

\begin{theorem}\label{thm:genericEquivalence}
   Let $\Gamma = \{R_1, \ldots, R_k\}$ be a set of relations on domain $D$, where $R_i$ is a relation of arity $\ar(R_i)$. 
Every stable maximal instance $\X$ of $\CSP(\Gamma)$ can be represented as a unique family $V \subseteq D^N$ for some $N \geq 1$.
Conversely, to every family of vectors $V \subseteq D^N$ we can assign a stable
  maximal instance of $\CSP(\Gamma)$ on $|V|$ variables and $|\M_i(V)|$
  constraints using the relation $R_i$.
\end{theorem}

\begin{proof}
We start with observing that every maximal instance gives rise to a different set of solutions.
Indeed, if two instances $\X_1,\X_2$ have the same set of solutions, then they must have the same set of variables (up to isomorphism), and thus any clause that belongs to $\X_1$, but not $\X_2$, contradicts the maximality of $\X_2$ and vice versa.
Therefore, if $c_1, \ldots, c_N : X \to D$ are the solutions to a stable, maximal instance $(X,C)$ of $\CSP(\Gamma)$, for every $x \in X$ we can set $v_x=(c_1(x),\ldots, c_N(x)) \in D^N$. Since the instance is stable, for each element of $X$ we get a different vector, which proves the first statement.

For the converse, we let the variables of the constructed instance be given by $V$. Now, for every $i \in [k]$, and every tuple $(v_1, \ldots, v_{\ar(R_i)})$ of elements of $V$, we add $(v_1, \ldots, v_{\ar(R_i)},R_i)$ to the set of clauses if and only if the vectors $v_1,\ldots, v_{\ar(R_i)}$ are precisely the consecutive rows of some matrix from $\M_i(V)$.

Since all possible constraints of this form are added, the created instance $(V,C)$ must be maximal. Observe that $(V,C)$ has at least $N$ solutions, $c_1, \ldots, c_N : V \to D$, where $c_i(v)$ is the $i$-th coordinate of~$v$. That these are solutions is certain by construction. The existence of these solutions ensures that no two variables can be merged --- since if $v, v' \in V$ and $v \neq v'$, then they must differ in some coordinate, say $i$, and hence they are assigned different values in $c_i$.
\end{proof}

Thus, by \Cref{thm:genericEquivalence}, estabilishing lower bounds for strong sparsification for different CSPs can be done by finding appropriate families of matrices. Observe that \Cref{cor:genericLowerBound} generalises \Cref{rem:lowerbound}, that yields the $n^{\log_2 3}$ lower bound for 1-in-3 SAT, to any $\Gamma$. 
Later, in \Cref{thm:final1inkbound} we will show an even stronger strong sparsification for 1-in-3 SAT lower bound.  

\begin{corollary}\label{cor:genericLowerBound}
  Let $\Gamma=\{R_1, \ldots, R_k\}$ be relations on domain $D$. Then,
  $\CSP(\Gamma)$ does not have a strong sparsification algorithm with performance $O(n^{\log_{|D|} \max_i |R_i|})$.
\end{corollary}
\begin{proof}
   As noticed before, any family of stable instances of $\CSP(\Gamma)$ gives a lower bound for the performance of any strong sparsification algorithm for $\CSP(\Gamma)$. Thus it is enough to the second part of \Cref{thm:genericEquivalence} to an arbitrary fixed $N \geq 1$ and $V = D^N$.
Since $|\M_i(D^N)|$ has $|R_i|^N=|D|^{N\cdot \log_{|D|}|R_i|}$ elements, the bound follows. 
\end{proof}

\begin{corollary}\label{cor:naeLowerBound}
    It is not possible to strongly sparsify ternary Not-All-Equal-3-SAT with
    performance better than $n^{\log_2 6} \approx n^{2.58}$. In general, it is
    not possible to strongly sparsify Not-All-Equal-$k$-SAT better than $n^{\log_2 (2^k - 2)}$, which is greater than $n^{k - 1}$ for $k \geq 3$.
\end{corollary}

\begin{corollary}\label{cor:coloringLowerBound}
    Graph $k$-colouring admits no strong sparsification with performance better than $n^{\log_k \binom{k}{2}} = n^{1 + \log_k \frac{k - 1}{2}}$.
\end{corollary}

We remark that the results in \Cref{cor:naeLowerBound} and \Cref{cor:coloringLowerBound} hold even if the sparsification algorithm is allowed to work in exponential time.
If we are interested in showing that it is not possible to find a strong sparsifier in polynomial time then even stronger bounds can be obtained. 
Indeed, Jansen and Pieterse show that no kernelisation of size $O(n^{k - 1})$ for Not-All-Equal-$k$-SAT~\cite{Jansen19:toct}, and of size $O(n^{2 - \epsilon})$ for $k$-colouring~\cite{JP19:algoritmica} is possible unless $\NP \subseteq \coNP / \poly$. 
Since strong sparsification is a restricted form of kernelisation, this applies to strong sparsification as well.

\begin{corollary}
$\LIN^k_2$, a system of homogeneous linear equations in $k$ variables modulo 2,
  cannot be strongly sparsified better than $n^{k-1}$.
\end{corollary}

If one traces through the exact instance we create in this corollary, it is a generalisation of the XOR-based instance from the introduction: the variable set is $V = \F_2^N$, and for every $x_1,\ldots, x_k \in \F_2^N$ with $x_1 + \cdots + x_k = 0$, we introduce the equation $x_1 + \cdots + x_k = 0$. Note that this is in severe contrast to other notions of sparsification (e.g.~computing a basis/kernelisation), where the number of constraints can be reduced further, to linear.

In what remains we will focus on improving the lower bound given by \Cref{cor:genericLowerBound} for $\ell$-in-$k$-SAT, where $1 \leq \ell < k$.

\begin{lemma}\label{lem:ellinkLowerBound}
    Let $1 \leq \ell < k$. Monotone $\ell$-in-$k$-SAT cannot be strongly sparsified with performance better than $n^{f(k, \ell)}$, where
    \[f(k,\ell)=\frac{{k \choose \ell}\ln {k \choose \ell}}{{k \choose \ell}\ln {k \choose \ell} - {k-1 \choose \ell}\ln {k-1 \choose \ell} - {k-1 \choose \ell-1} \ln {k-1 \choose \ell -1}}.\]
\end{lemma}
\begin{proof}
    Let $u = {k \choose \ell}$, and for every $t \geq 1$ define $\M_t$ be the family of matrices with $k$ rows and $ut$ columns that contain every tuple from $R_{\ell{\textrm{-in-}}k}$ as a column precisely $t$ times.
Define $V_t \subseteq \{0,1\}^{ut}$ to be the set of all possible rows of matrices from $\M_t$. 
By \Cref{thm:genericEquivalence}, every pair $(V_t,\M_t)$ gives rise to a stable maximal instance $\X_t$ of monotone $\ell$-in-$k$-SAT on $|V_t|$ variables and $|\M_t|$ constraints. 

There are $\binom{ut}{t, \ldots, t}$ matrices in $\M_t$, where 
\[
\binom{ut}{t, \ldots, t} = \frac{(ut)!}{(t!)^{k \choose \ell}}.
\]
Moreover, in each row, ${ut \choose t \cdot {k-1 \choose \ell-1}}$ different vectors can appear.
Now observe that, by applying Stirling's approximation, for $t \to \infty$ and natural numbers $u_1 + \cdots + u_k = u$ we have
\begin{multline*}
\ln \binom{ut}{t u_1, \ldots, t u_k} = ut \ln ut - ut - \left(\sum_{i = 1}^k t u_i \ln t u_i\right) + \left(\sum_{i = 1}^k t u_i\right) + O(\ln t) = \\
ut \ln t + ut \ln u - \left(\sum_{i = 1}^k tu_i \ln t + t u_i \ln u_i\right) + O(\ln t)
= t\ln\frac{u^u}{u_1^{u_1} \cdots u_k^{u_k}} + O(\ln t).
\end{multline*}

Thus, the number of constraints in $\X_t$ is at least
\begin{multline*}
        \log_{|V_t|} |\M_t| = \frac{\ln {ut \choose t, \ldots, t}}{\ln {ut \choose t \cdot {k-1 \choose \ell-1}}} \xrightarrow{t \to \infty} \frac{t \ln u^u}{t \ln \frac{u^u}{{k-1 \choose \ell}^{k-1 \choose \ell} \cdot {k-1 \choose \ell -1}^{k-1 \choose \ell-1}}} = \frac{{k \choose \ell}\ln {k \choose \ell}}{{k \choose \ell}\ln {k \choose \ell} - {k-1 \choose \ell}\ln {k-1 \choose \ell} - {k-1 \choose \ell-1} \ln {k-1 \choose \ell -1}},
    \end{multline*}
as desired.
\end{proof}

For the particular case $\ell = 1$, note that this becomes
\begin{equation}\label{eq:fEq}
f(k, 1)
=
\ln_{k^k / (k-1)^{(k-1)}} k^k.
\end{equation}

One can observe that all 1-in-$k$-SAT, Not-All-Equal-$k$-SAT and $\LIN_2^k$ have similar properties: they cannot be strongly sparsified with performance $n^{(1 - \epsilon)k}$ for $\epsilon > 0$ as $k \to \infty$.

Finally, we will show that the bound from~\Cref{lem:ellinkLowerBound} is optimal
in the case of $\ell=1$. In other words, \emph{every} instance of 1-in-$k$-SAT
\emph{can} (in principle) be sparsified with performance $n^{f(k, 1)}$.
Indeed, this sparsification can be found in exponential time by brute force but
it is unclear how to achieve it in polynomial time. Note that for $k=3$ we get
performance $n^{\log_{27 / 4} 27}\approx n^{1.725\ldots}$, as mentiond in the
introduction.

We use the results of~\cite{Kane17:ejc,Ivanisvili17:ejc}. For this, recall that for functions $f, g : \{0, 1\}^m \to \mathbb{R}$, their convolution is defined by
\[
(f * g)(x_1, \ldots, x_m) = \sum_{\substack{y_1, \ldots, y_m, z_1, \ldots, z_m \in \{0, 1\} \\ y_i + z_i = x_i}} f(y_1, \ldots, y_m) g(z_1, \ldots, z_m),
\]
and the $p$-norm of $f$ is given by
\[
|| f ||_p = \left(\sum_{x_1, \ldots, x_m \in \{0, 1\}} f(x_1, \ldots, x_m)^p \right)^{1/p}.
\]
In particular, if $f, g$ are indicator functions for families of sets then $f * g$ counts the number of ways sets can be written as a disjoint union of one set from each family;  moreover, in this case $|| f ||_p$ is equal to the size of the represented set family, raised to the power $1/p$.
\begin{theorem}[\cite{Ivanisvili17:ejc}]
    For $m \geq 1$ and $f_1, \ldots, f_n : \{0, 1\}^m \to \mathbb{R}$, we have
    \[
    (f_1 * \cdots * f_n)(1,\ldots,1) \leq \prod_{i = 1}^n || f_i ||_{p_n},
    \]
    where $p_n = n / f(n, 1)$, and $f$ is taken from~\eqref{eq:fEq}.
\end{theorem}

By applying this result to the case where $f_1 = \cdots = f_n$ is an indicator function of a family of sets, we obtain the following.

\begin{corollary}
    If $\mathcal{F}$ is a family of subsets of $[m]$, then $[m]$ can be written as a disjoint union of $n$ sets from $\mathcal{F}$ in at most $|\mathcal{F}|^{n / p_n} = |\mathcal{F}|^{f(n, 1)}$ ways.
\end{corollary}

This is just a different way of stating that every $n$-vertex stable, maximal instance of 1-in-$k$-SAT has at most $n^{f(k, 1)}$ edges (indeed, simply replace every vector from the representation of \Cref{thm:genericEquivalence} with the set for which that vector is the indicator vector). For completeness, let us state this result.

\begin{theorem}\label{thm:final1inkbound}
    For any $k \geq 1$, any $n$-variable instance of monotone 1-in-$k$-SAT can be strongly sparsified with performance
    \[
      n^{\ln_{k^k / (k-1)^{(k-1)}} k^k}
    \]
    in exponential time and no better strong sparsification is possible.
\end{theorem}

{\small
\bibliographystyle{alphaurl}
\bibliography{bibliography}
}

\end{document}